%
%
%
%
%
%
\RequirePackage{fix-cm}
\documentclass[smallextended]{svjour3}       
\smartqed  

\usepackage[utf8]{inputenc}

\usepackage{graphicx}
\usepackage{hw}
\usepackage[noend]{algpseudocode}
\usepackage{xcolor,multirow}
\usepackage{bm}
\usepackage{mathrsfs,tabularx}
\usepackage{multicol,enumitem,caption}

\usepackage{subcaption}

\def\comp{\ensuremath\mathop{\scalebox{.7}{$\circ$}}}
\newcommand*\dotv{{}\cdot{}}

\newcolumntype{L}[1]{>{\raggedright\arraybackslash}p{#1}}
\newcolumntype{C}[1]{>{\centering\arraybackslash}p{#1}}
\newcolumntype{R}[1]{>{\raggedleft\arraybackslash}p{#1}}

%
%
%
%
%
\begin{document}

\title{Nonconvex and Nonsmooth Sparse Optimization via Adaptively Iterative Reweighted Methods
}


\author{Hao Wang\textsuperscript{1} \and Fan Zhang\textsuperscript{1,3,4}  \and Yuanming Shi\textsuperscript{1} \and Yaohua Hu\textsuperscript{2}
\thanks{CONTACT: Hao Wang. Email: wanghao1@shanghaitech.edu.cn}
}


\institute{\textsuperscript{1}
              School of Information Science and Technology,
              ShanghaiTech University, Shanghai 201210, People's Republic of China           
           \and\\
           \textsuperscript{2}
           College of Mathematics and Statistics, Shenzhen
           Key Laboratory of Advanced Machine Learning and Applications, Shenzhen University, Shenzhen 518061, People's Republic of China           
           \and\\
           \textsuperscript{3}
           University of Chinese Academy of Sciences, Beijing 100049, People's Republic of China       
           \and\\
           \textsuperscript{4}
           Shanghai Institute of Microsystem and Information Technology, Chinese
           Academy of Sciences, Shanghai 200050, People's Republic of China 
}

\date{Received: date / Accepted: date}

\maketitle

\begin{abstract}
We propose a general formulation of nonconvex and nonsmooth sparse optimization problems with convex set constraint, which can take into account most existing types of nonconvex sparsity-inducing terms, bringing strong applicability to a wide range of applications. We design a general algorithmic framework of iteratively reweighted algorithms for solving the proposed nonconvex and nonsmooth sparse optimization problems, which solves a sequence of weighted convex { regularization} problems with adaptively updated weights.  First-order optimality condition is derived and global convergence results are provided under loose assumptions, making our theoretical results a practical tool for analyzing a family of various reweighted algorithms. The effectiveness and efficiency of our proposed formulation and the algorithms are demonstrated in numerical experiments on various sparse optimization problems. 
\keywords{nonconvex and nonsmooth sparse optimization \and iteratively reweighted methods }
\end{abstract}

\section{Introduction}
\label{intro}
Nonconvex and nonsmooth sparse optimization problems
have been becoming a prevalent research topic in many disciplines of applied mathematics and engineering.  Indeed, there has been a tremendous increase in a number of application areas in which 
nonconvex sparsity-inducing techniques  have been employed, such as machine learning~\cite{bradley1998feature,ml2}, telecommunications~\cite{Shi2016Smoothed,gsbf}, image reconstruction~\cite{image1}, sparse recovery~\cite{daubechies2010iteratively,malek2014iterative} and signal processing~\cite{candes2008enhancing,miosso2009compressive}. 
This is mainly because of their superior ability to  reduce the complexity of a system, improve the generalization of the prediction performance, and/or enhance the robustness of the solution, compared with traditional convex sparsity-inducing techniques. 

Despite their wide application, nonconvex and nonsmooth sparse optimization problems are computationally challenging to solve
due to the nonconvex and nonsmooth nature of the sparsity-inducing terms. A popular method for handling the convex/nonconvex { regularization} problems is the iteratively reweighted algorithm, which approximates the nonconvex and nonsmooth problem by a sequence of trackable convex subproblems.  
There have been some  iteratively reweighted algorithms   proposed for special cases of the nonconvex and nonsmooth problems.  
{For example, in~\cite{chartrand2008iteratively,ling2013decentralized,lai2013improved} W. Yin \emph{et al}. have designed an  iteratively reweighted algorithm  for solving the unconstrained nonconvex $\ell_p$ norm model} and   in~\cite{lu2014iterative} Z. Lu has  analyzed the global convergence of a class of reweighted algorithms  for the unconstrained nonconvex $\ell_p$ regularized problem.  
The constrained $\ell_p$-regularization problem is studied to improve the image restoration using a priori information and  the optimality condition of this problem is given in~\cite{wang2015optimality}.  
The critical technique of this type of algorithms is to add relaxation parameters to transform the nonconvex and nonsmooth sparsity-inducing terms into smooth approximate functions and then use linearization to obtain convex subproblems~\cite{chen2010convergence,lai2011unconstrained}. It should be noticed that the relaxation parameter should be driven to 0 in order to obtain the solution of the original unrelaxed problem. 
Two most popular variants of iteratively reweighted type of algorithms are the iteratively reweighted $\ell_1$ minimization and the iteratively reweighted { squared $\ell_2$-norm minimization}. The former has  convex but nonsmooth subproblem, while  the latter leads to  convex and smooth  subproblems. 
It has been reported by E. Candes \emph{et al.} in~\cite{candes2008enhancing} that the reweighted $\ell_1$ minimization can significantly enhance sparsity of the solution.  It has been demonstrated that Iteratively reweighted least-squares have greatly promoted the computation and correctness of robust regression estimation~\cite{holland1977robust,street1988note}.

However, iteratively reweighted algorithms are generally difficult to track and analyze. This is mainly because most nonconvex functions are  non-Lipschitz continuous, especially around sparse solutions that we are particularly interested in.  The major issue caused by this situation is  that the optimal solution cannot be  characterized by common optimality conditions. For some special cases where the sparsity-inducing term is $\ell_p$-norm and no constraint is involved, the first-order and second-order sufficient optimality conditions have been studied  in~\cite{wang2015optimality,ochs2015iteratively}.  
X. Chen \emph{et al} have derived a first-order necessary optimality condition for local minimizers  and define the generalized stationary point of the constrained optimization
problems with nonconvex regularization~\cite{bian2017optimality}.
These results are used by Z. Lu to derive  the global convergence of a class of iteratively reweighted $\ell_1$ and $\ell_2$ methods for unconstrained $\ell_p$ regularization problems~\cite{lu2014iterative}. 
For the sum of a convex function and a (nonconvex) nondecreasing function applied to another convex function, the convergence to a critical point of the iteratively reweighted $\ell_1$ algorithm   is provided when the objective function must be coercive~\cite{ochs2015iteratively}. As far as we know, the convergence results of iteratively reweighted $\ell_1$ and $\ell_2$ methods for   Non-Lipschitz nonconvex and nonsmooth problems with general convex-set constraint have not been provided.

As for more general cases, the analysis in current work has many limitations due to  this obstacle for theoretical  analysis.  
First, instead of driving the relaxation parameter to 0,  many existing  methods~\cite{chartrand2008iteratively,EPS}  aim to show the convergence
to the optimal solution of the relaxed sparse optimization
problems. An iteratively reweighted least squares   algorithm for the relaxed problem $\ell_p$ problem  in sparse signal recovery has been discussed in~\cite{ba2014convergence}  with  local convergence rate analysis.
A critical aspect of any implementation of such an approach is the selection of the relaxation parameters which prevents the weights from becoming overwhelmingly large. As been explained in~\cite{chartrand2008iteratively}, large relaxation parameters will smooth out many local minimizers, whereas small values can cause  the subproblems difficult to solve and the algorithm too quickly get trapped into local minimizers. For this purpose, updating strategies of the relaxation parameter have been studied in~\cite{EPS}, but it is only designed  for constrained convex problems. 

Second,  some methods assume Lipschitz continuity of the objective  in their analysis, which only holds true for few nonconvex sparsity-inducing terms such as log-sum { regularization}~\cite{lu2014iterative,ochs2015iteratively}.  Though this assumption is not explicitly required by some other researchers,  they need another assumption that  the negative of the sparsity-inducing term--which is convex--is subdifferentiable everywhere~\cite{ochs2015iteratively}. However, it must be noticed that this is a quite strong assumption and  generally not suitable for most sparsity-inducing terms e.g. $\ell_p$-norm, consequently limiting their applicability to various cases. 

This situation may become even worse when a general convex set constraint is added to the problem.  To the best of our knowledge, only simple cases such as linearly constrained cases have been studies by current work. 
To circumvent the obstacle for analysis, current methods either focus on the relaxed problem as explained above, or unconstrained reformulations   where the constraint violation is  penalized in the objective~\cite{chartrand2008iteratively,lu2014iterative,wang2015optimality}.  The latter approach 
then arises the issue of  how to select the proper { regularization} parameter value. 

Moreover,   some work~\cite{lu2014proximal} assumes the coercivity  of the objective to guarantee that the iterates generated by the algorithms must have clustering points. 
This sets another limitation  as many sparsity-inducing terms is bounded above, e.g.,  arctan function. This assumption therefore requires the rest part of the objective must be coercive, which is generally not the case. 

Overall, for  cases involving  more general sparsity-inducing terms and convex-set constraints, the analysis of the behavior of iteratively reweighted algorithms remains an open question. 

{
Despite of the iteratively reweighted algorithms, the Difference of Convex (DC) algorithm is also used for tackling some specific-form sparse optimization problems. As J. Pang \emph{et al} mentioned in \cite{ahn2017difference,cao2018unifying}, a great deal of existing nonconvex regularizations can be represented as DC functions, and then can be solved by the DC algorithm. Unfortunately, these works need the assumption that the nonconvex { regularization} term is Lipschitz differentiable, which limits the applicability to the cases like $\ell_p$-norm { regularization}. 
To address this issue, T. Liu \emph{et al} have proposed a Successive Difference of Convex Approximation Method (SDCAM) in \cite{liu2019successive}, which makes use of the Moreau envelope as a smoothing technique for the nonsmooth terms. However, they need the proximal mapping of each nonsmooth function is easy to compute. This requirement obstacles the use of lots of nonconvex regularizations, of which the proximal operator can not be solved effectively. For example, when we use the $\ell_p$ norm ($0<p<1$) as the regularization term, only the cases $p=1/2$ and $p=2/3$ have an analytical solution of the associated proximal mapping as shown in \cite{cao2013fast}.
This fact limits the SDCAM to select other values of $p$. And as shown in experimental results, the $\ell_p$ regularization with different values of $p$ perform differently. Usually, smaller $p$ induces more sparse solution.

}

In this paper, we consider a unified formulation of the convex set constrained nonconvex and nonsmooth sparse optimization problems. A general algorithmic framework of Adaptively Iterative Reweighted (AIR) algorithm is presented for solving these problems. We   derive  the first-order  condition to characterize the optimal solutions and analyze global  convergence of the proposed method to the first-order optimal solutions. The most related research work mainly includes 
the iteratively  reweighted algorithms proposed by~\cite{EPS}  for solving general constrained convex problems,   the reweighted methods by~\cite{lu2014iterative,lu2017ell} for solving unconstrained $\ell_p$ regularization problems, and the algorithmic framework proposed  in~\cite{ochs2015iteratively}  for solving the unconstrained nonsmooth and nonconvex optimization problems. However, we emphasize again  that our  focus is on dealing with cases with general nonconvex and nonsmooth sparsity-inducing terms and general convex set constraints---a stark contrast to the situations considered by most existing methods. 

The contributions of this paper can be summarized as follows. 

\begin{itemize}
	
	\item Our unified problem formulation can take into account  most existing types of  nonconvex sparsity-inducing functions which also allows for   group structure  as well as    general convex set constraint. A general algorithmic framework of adaptively iteratively reweighted algorithms is developed    by  solving a sequence of trackable convex subproblems.  A unified first-order  necessary conditions is derived to characterize the optimal solutions  by using Fr\'echet subdifferentials, and global convergence of the proposed algorithm is provided. 
	
	\item For reweighted $\ell_1$ and $\ell_2$ minimizations, our algorithm allows for vanishing relaxation parameters, which can avoid the issue of selecting appropriate value of relaxation parameter. The  global convergence analysis for both iteratively reweighted $\ell_1$ and $\ell_2$ algorithms are provided.  We show that every limit point generated by the algorithms must satisfy the first-order necessary optimality condition for the original unrelaxed problem, instead of the relaxed problem---a novel result that most current work does not possess. E. Candes, M. Wakin and S. Boyd have put forward several open questions in~\cite{candes2008enhancing} about the reweighted $\ell_1$ algorithm, including under what conditions the reweighted $\ell_1$ algorithm will converge,    when the iterates have limit point, and the ways of updating $\epsilon$ as the algorithm progresses towards a solution. Our work  provides answers to these questions.

	\item 
	We have proven that the existence of the cluster point to the algorithm can be guaranteed by understanding conditions under which the iterates generated is bounded. The conditions guaranteeing the boundedness is also given.
	Conditions for selecting the starting point and the initial relaxation parameters are also  provided  to guarantee the global convergence. This makes our methods also apply for cases  where the objective is not coercive.
\end{itemize}

\subsection{Organization}
In the remainder of this section, we outline our notation
and introduce various concepts that will be employed throughout the paper. In \S~\ref{s.Pd},
we describe our problem of interests and explain its connection to various existing types of sparsity-inducing techniques.
In \S~\ref{s.Ad}, we describe the details of our proposed AIR method and apply it to different types of nonconvex sparsity-inducing terms. 
The optimality condition and the global convergence of the proposed algorithm in different situations are  provided in \S~\ref{s.Conv}. 
We discuss implementations of our methods and the
results of numerical experiments in in \S~\ref{sec.experiment}.  Concluding remarks are provided in \S~\ref{sec.conclusion}.

\subsection{Notation and preliminaries}
Much of the notation that we use is standard, and when it is not, a definition is provided. For convenience, we review some of this notation and preliminaries here. 

We use $\mathbf{0}$ to represent the vector filled with all 0s  of appropriate dimension. Let $\mathbb{R}^n$ be the space of real $n$-vectors, $\mathbb{R}^n_+$ be the nonnegative orthant of $\mathbb{R}^n$, $\mathbb{R}^n_+=
\{\xbf \in\mathbb{R}^n : \xbf \ge \mathbf{0}\}$  and the nonpositive orthant  
$\{\xbf \in\mathbb{R}^n : \xbf \le \mathbf{0}\}$. Moreover,  let  $\mathbb{R}^n_{++}$ be its interior $\mathbb{R}_{++}^n:=\{\xbf\in\mathbb{R}^n: 
\xbf >\mathbf{0} \}$.  The set of $m\times n$ real matrices is denoted by $\mathbb{R}^{m\times n }$.  
For a pair of vectors $(\ubf, \vbf)\in\mathbb{R}^n\times \mathbb{R}^n$, their inner product is written as 
$\la \ubf, \vbf\ra$.  The set of nonnegative integers is denoted by $\Nb$. 
Suppose  $\mathbb{R}^n$ be the product space of subspaces $\mathbb{R}^{n_i}, i=1,\ldots, m$ with $\sum_{i=1}^mn_i=n$, i.e., it takes decomposition $\mathbb{R}^n = \mathbb{R}^{n_1}\times \ldots \times \mathbb{R}^{n_m}$.
Given a closed convex set $X \subset \mathbb{R}^n$, the normal cone to $X$ at a point $\bar\xbf \in X$ is given by 
\[ N(\bar\xbf|X) := \{\zbf |  \langle \zbf, \xbf-\bar\xbf \rangle \leq 0,  \  \forall \xbf\in X\}.\]
The characteristic  function of $X$ is defined as 
\[ \delta(\xbf|X) = \begin{cases}  0 & \text{ if } \xbf\in X,\\
+ \infty & \text{ otherwise.}
\end{cases}
\] 
The indicator operator $\mathbb{I}(\cdot)$  is an indicator function that takes a value of 1 if the statement is true and 0 otherwise.

For a given $\alpha \in\mathbb{R}$,  denote the  level set of $f $ as  
\[ L (\alpha; f) := \{\xbf \in \mathbb{R}^n  |  f(\xbf)\le  \alpha\}.\]
In particular, we are interested in level set with an upper bound reachable for $f$: 
\[ L (f(\hat\xbf); f) := \{\xbf \in \mathbb{R}^n  |  f(\xbf)\le  f(\hat\xbf)\}.\]
The subdifferential of a convex function $f$ at $\xbf$ is a set defined by 
\[ \partial f(\xbf) = \{ \zbf \in \mathbb{R}^n  |  f(\ybf) - f(\xbf) \ge \la \zbf, \ybf - \xbf\ra, \forall \ybf\in\mathbb{R}^n\}. \]
Every  element $\zbf \in \partial f(\xbf)$ is referred to as a subgradient.   
To characterize the optimality conditions for nonsmooth problems, we need to 
introduce the concepts  of Fr\'echet subdifferentiation. 
In fact, there are a variety of   subdifferentials known by now including limiting subdifferentials, approximate subdifferentials 
and Clarke's generalized gradient, many of which 
can be used here for deriving the optimality conditions.  
The major tool we choose in this paper is the Fr\'echet subdifferentials, 
which were introduced in~\cite{bazaraa1974cones,kruger1981varepsilon} and discussed in~\cite{kruger2003frechet}. 

\begin{definition}[Fr\'echet subdifferential] Let $f$ be a function from a real Banach space into an extended real line 
	$\bar{\mathbb{R} }= \mathbb{R} \cup \{+\infty\}$, finite at $\xbf$.  The Fr\'echet subdifferential of $f$ at $\xbf$, denoted as 
	$\partial_F f(\xbf)$, is the set 
	$$
	\partial_F f(\xbf) = \left\{ \xbf^* \in \mathbb{R}^n:  \liminf\limits_{\ubf\to \xbf} \frac{f(\ubf)-f(\xbf)-\la \xbf^*, \ubf-\xbf\ra }{\|\ubf-\xbf\|} \ge 0\right\}.
	$$
	Its elements are referred to as Fr\'echet subgradients.
\end{definition}

For a composite function $r\comp c(\xbf)$, where $c: \mathbb{R}^n\to\mathbb{R}$ and $r: \mathbb{R} \to \mathbb{R}$,  denote 
$\partial_F r(c(\xbf))$ (or simply $\partial_F r(c)$) as the Fr\'echet subdifferential of $r$ with respect to $c$, and 
$r'(c(\xbf))$ (or simply $r'(c)$) as the derivative of $r$ with respect to $c(\xbf)$ if $r$ is differentiable at $c(\xbf)$. 

\section{Problem Statement and its applications}\label{s.Pd}
In this section, we propose a unified formulation of the constrained nonconvex and nonsmooth sparse optimization problem, and list the instances in some prominent applications.
\subsection{Problem Statements}
We consider the following nonconvex and nonsmooth sparse optimization problem
\begin{equation}\label{f.sparse}
\begin{aligned}
\min_{\xbf\in\mathbb{R}^n} & \quad  f(\xbf)+  \Phi(\xbf)\\
\st &\quad   \xbf \in  X,
\end{aligned}
\end{equation}
where $f:\mathbb{R}^n\to \mathbb{R}$ is smooth and convex   
and $X\subset\mathbb{R}^n$ is a closed convex set.  
This type of problem is a staple for many applications in signal processing~\cite{candes2008introduction,liu2018sparse}, wireless communications~\cite{qin2018sparse,shi2018generalized} and machine learning~\cite{bach2012optimization,hastie2015statistical}. For example, in signal processing, $f$ may be the mean-squared error for signal recovery, $X$ may be a nonnegative constraint for signal~\cite{khajehnejad2011sparse}; in wireless communications, $f$ may represent the system performance such as transmit power consumption, $X$ may be the transmit power constraints and quality of service constraints~\cite{gsbf};
in machine learning, $f$ can represent the convex loss function, such as the cross-entropy loss for logistic regression~\cite{harrell2015ordinal}.

In a large amount of applications, the being recovered vector $\xbf$ is expected to have some sparse property in a structured manner. To handle this type of structured sparsity, various types of group-based $\Phi$ has been studied in~\cite{wainwright2014structured}.
Consider a collection of groups $\Gcal = \{\Gcal_1,\Gcal_2,\cdots,\Gcal_m\}$ with $|\Gcal_i| = n_i$.
The union over all groups covers the full index set and $\sum_{i=1}^{m} n_i = n$. The structured vector $\xbf$ can be written as
\[\xbf = [ \underbrace{x_1,x_2,\cdots,x_{n_1}}_{\xbf_{\Gcal_1}^T},\cdots,\underbrace{x_{n-n_{m} + 1},\cdots,x_{n}}_{\xbf_{\Gcal_m}^T}]^T.
\]
With these ingredients, the associated group-based function $\Phi$ takes the form
\[ \Phi(\xbf) =   \sum\limits_{i=1}^m r_{i}(c_{i}(\xbf_{\Gcal_i})),\]
where  $c_i: \mathbb{R}^{n_i}\to \mathbb{R}$ is convex and $r_i: \mathbb{R}\to \mathbb{R}$ is concave for each $i$.
Throughout this paper, we make the following assumptions about $f$, $r_i$, $c_i$  and $X$. 

\begin{assumption} \label{basic assumption}
	The functions $f$, $r_i$, $c_i$, $i=1,\ldots, m$, and set $ X$   are such that 
	\begin{enumerate}
		\item[(i)] $ X$ is closed and convex. 
		\item[(ii)] $f$ is smooth, convex and bounded below by $\underline f$ on $ X$.
		\item[(iii)] $r_i$ is  smooth on $\mathbb{R}\setminus\{0\}$,   concave  and strictly increasing on $\mathbb{R}_{+}$ with $r_i(-c) = r_i(c)$ and $r_i(0) = 0$, and is  Fr\'echet subdifferentiable at 0.
		\item[(iv)] $c_i$ is convex and coercive with $c_i(\xbf_i)\ge0, \forall \xbf\in X$ where the equality holds if and only if $\xbf_i=\mathbf{0}$.
		\item[(i)]   The composite function $\phi_i=r_i\comp c_i(x) = r_i(c_i(\xbf_i))$ is concave on regions $\{ x \mid c_i(\xbf_i) \ge 0\}$ and 
		$\{ x \mid c_i(\xbf_i) \le 0\}$.
		
	\end{enumerate}
\end{assumption}

\begin{remark}
	The  symmetry of $r_i$ is not a requirement, since $c_i(\xbf) \ge 0$ is assumed always true; the purpose of this assumption is to simplify the analysis.
\end{remark}

Most existing sparse optimization problems can be reverted to~\eqref{f.sparse}.  In next subsection,   
we describe the important applications of problem~\eqref{f.sparse} 
and explain the specific forms of the functions  $f$, $r_i$, $c_i$ in the example. 
Based on different formulations of the composite function $\Phi(\xbf)$, there are a great deal of 
nonconex sparsity-inducing techniques to promote sparse solutions, such as the approximations of the $\ell_0$ norm of $\xbf$.

{The problems considered here allow the sparse-inducing terms  to be non-Lipschitz, and it also allows the presence of a 
	general convex set constraint. If some components of the stationary point tend to zero, it is possible that both the Fr\'echet subdifferentials of the non-Lipschitz term and the normal direction of the constraint tend to infinity. This leads to the difficulty of analyzing the convergence of their linear combination (i.e., the optimality condition residual of the problem).}      

\subsection{Sparsity-inducing Functions}

Many applications including signal processing, wireless communications and machine learning involve the minimization of  the $\ell_0$-norm of the variables $\|\xbf\|_0$, i.e., the number of nonzero components in $\xbf$. 
However,  this   is    regarded   as an NP-hard problem,  thus various approximations of $\ell_0$ norm have been proposed.  
By   different choice of the formulation  $r_i$ and $c_i$, there exist many approximations to $\ell_0$ norm, so that 
a smooth approximate problem of  \eqref{f.sparse} is derived with 
\[ \|\xbf\|_0\approx \Phi(\xbf) = \sum_{i=1}^n r_i(c_i(x_i)).\]
In the following discussion, we only provide the expression of $r_i$ on $\mathbb{R}_+$, since 
by Assumption~\ref{basic assumption}, $r_i$ can be defined accordingly on $\mathbb{R}_-$. 

\subsubsection{The EXP Approximation}
The first instance is the feature selection algorithm via concave minimization proposed by Bradley and Mangasarian~\cite{bradley1998feature} with approximation 
\begin{equation}\label{f.exp}\tag{EXP}
\|\xbf\|_0 \approx \sum_{i=1}^{n}1-e^{-p|x_i|}\  \text{ with }\  p> 0,
\end{equation}
where $p$ is chosen to be sufficiently large to promote sparse solutions. The concavity of this function leads to a finitely terminating algorithm and a more accurate
representation of the feature selection algorithm. It is reported  that the algorithms with this formulation obtained a reduction in error with selected features fewer in number and they are faster compared to traditional convex feature selection algorithms. 
For example, we can choose 
\[c_i(x_i) = |x_i|,\ r_i =1-e^{-pc_i}  \text{ or  }  c_i = x_i^2,\ r_i =1-e^{-p \sqrt{c_i}},\] 
so that this approximation  can be viewed  as a specific formulation of $\Phi$.

\subsubsection{The LPN Approximation}
The second instance, which is widely used in many applications currently, is to approximate 
the $\ell_0$ norm by $\ell_p$ quasi-norm~\cite{fazel2003log} 
\begin{equation}\label{f.lp}\tag{LPN}
\|\xbf\|_0 \approx \sum_{i=1}^{n}|x_i|^p \ \text{ with }  p\in(0,1)
\end{equation}
and  $p$ is chosen close to 0 to enforce sparsity in the solutions.  Based on this approximation, numerous applications and algorithms have emerged. 
Here we can choose 
\[ c_i(x_i) =|x_i|, r_i(c_i)=c_i^p  \text{ or  }  c_i(x_i)=x_i^2,   r_i (c_i)= c_i^{p/2}\]
in the formulation of $\Phi$.

\subsubsection{The LOG Approximation}
Another option for approximating $\ell_0$ norm, proposed in~\cite{lobo2007portfolio}, is to use the log-sum approximation  
\begin{equation}\label{f.log}\tag{LOG}
\|\xbf\|_0 \approx \sum_{i=1}^{n}\log{(1+p|x_i|)}  \  \text{ with } p > 0,
\end{equation}
and setting  $p$   sufficiently large leads to sparse solutions.  
We can choose  
$$ c_i(x_i) = |x_i|, r_i (c_i)= \log{(1+pc_i)},
$$
or  
$$  c_i(x_i)=x_i^2, r_i(c_i)=\log(1+p\sqrt{c_i}).$$

\subsubsection{The FRA Approximation}
The approximation technique proposed in~\cite{fazel2003log} suggests   
\begin{equation}\label{f.inv}\tag{FRA}
\|\xbf\|_0 \approx \sum_{i=1}^{n}\frac{|x_i|}{|x_i|+p}, \text{ with } p > 0,
\end{equation}
and   $p$ is required to be sufficiently small to promote sparsity.  One can use 
$$c_i(x_i) = |x_i|,\ r_i (c_i)= \frac{c_i}{c_i+p}, $$ or 
$$ c_i x_i= x_i^2,\ r_i (c_i)= \frac{\sqrt{c_i}}{\sqrt{c_i}+p}.$$

\subsubsection{The TAN Approximation}
Cand\`es et al.   propose an approximation to the $\ell_0$ norm in~\cite{candes2008enhancing}
\begin{equation}\label{f.atan}\tag{TAN}
\|\xbf\|_0 \approx \sum_{i=1}^{n}\arctan(p|x_i|), \text{ with } p>0, 
\end{equation}
and sufficiently small $p$ can cause sparsity in the solution. The function $\arctan$ is bounded above and $\ell_0$-like. 
It is reported that this approximation tends to work well and often  better than the log-sum \eqref{f.log}. In this case, we can choose  
$$c_i(x_i) = |x_i|,   r_i(c_i) = \arctan{(pc_i)}, $$ or  
$$ c_i(x_i)=x_i^2, r_i(c_i)=\arctan{(p\sqrt{c_i})}.$$

\subsubsection{The SCAD and MCP Approximation}
Another nonconvex sparsity-inducing technique needs to be mentioned is the SCAD { regularization} proposed in~\cite{fan2001variable}, which require the derivative of $\phi_i$ to satisfy 
\begin{equation}\label{f.scad}\tag{SCAD}
c_i(x_i)=|x_i|,  \phi_i'(c_i) = \lambda\{\mathbb{I}(c_i\le \lambda ) + \frac{(a\lambda - c_i)_+}{(a-1)\lambda} \mathbb{I}(c_i>\lambda)\},\end{equation}
for some $a > 2$, where often $a=3.7$ is used.  Alternatively, the MCP~\cite{zhang2007penalized} { regularization} uses 
\begin{equation}\label{f.mcp}\tag{MCP}
c_i(x_i)=|x_i|,  \phi_i'(c_i) = (a\lambda-c_i)_+/a   \text{ for some } a \ge 1.
\end{equation}

\subsubsection{The Group Structure}
	These sparsity-inducing functions can also take into account group structures. For example, $\ell_{p,q}$-norm with $p\ge 1$ and $0<q<1$~\cite{hu2017group} is defined as
	$$\|\xbf\|_{p,q} = \(\sum_{i=1}^m \|\xbf_{\Gcal_i}\|_p^q\)^{1/q}. $$
	Therefore,   we can choose
	$$\Phi(\xbf) = \|\xbf\|_{p,q}^q,\ \text{ with }  c_i(\xbf_{\Gcal_i}) = \|\xbf_{\Gcal_i}\|_p\ \text{ and }\  r_i(c_i)=c_i^q.$$

\subsection{Problem Analysis}
There have been various literatures for solving the nonconvex and nonsmooth sparse optimization problems. In~\cite{chartrand2008iteratively,ling2013decentralized} Wotao Yin \emph{et al}. have considered solve the sparse signal recovery problem by using the unconstrained nonconvex $\ell_p$ norm model, proposed the associated iterative reweighted unconstrained $\ell_p$ algorithm and provided the convergence analysis for the reweighted $\ell_2$ case. In~\cite{lu2014iterative} Zhaosong Lu have provied the first-order optimality condition for the unconstrained nonconvex $\ell_p$ norm problem, and convergence analysis for both $\ell_1$ and $\ell_2$ types reweighted algorithm. However, it is not clear for analyzing the first-order optimalizty condition for the constrained nonconvex and nonsmooth sparse optimization problem~\eqref{f.sparse}. In order to address this issue, we propose the AIR algorithm in \S~\ref{s.Ad}, provide the first-order optimality condition for~\eqref{f.sparse} and the convergence analysis for the AIR algorithm in \S~\ref{s.Conv}.

\section{Adaptively Iterative Reweighted Algorithm}\label{s.Ad}
In this section, we present the adaptively iterative reweighted  algorithm  for minimizing the nonconvex and nonsmooth sparse optimization problem~\eqref{f.sparse}. 
\subsection{Smoothing Method}
In this subsection, we show how we deal with the nonsmoothness.
Before proceeding, we define the following functions for  $\xbf\in X$. 
Problem \eqref{f.sparse}  can be rewritten as 
\beq\label{prob.J0} \min_{\xbf}\ J_0(\xbf) := f(\xbf)+  \sum_{i \in \Gcal}r_i(c_i(\xbf_i )) + \delta(\xbf|X).\eeq
Adding  relaxation parameter $\epsbf \in \mathbb{R}_+^m$ to smooth the (possibly) nondifferentiable $r_i$, we propose the relaxed problem as
\beq\label{prob.J}
\min_{\xbf}\ J(\xbf; \epsbf):= f(\xbf)+  \sum_{i\in\Gcal}r_i(c_i(\xbf _i)+\eps_i)  + \delta(\xbf|X),
\eeq
and in particular, $J(\xbf; \mathbf{0}) = J_0(\xbf)$. Here we extend the notation of $\phi_i$ and use  $\phi_i(\xbf_i; \eps_i)$ to denote the relaxed sparsity-inducing function, so that 
$$
\phi_i(\xbf_i; \eps_i) := r_i(c_i(\xbf_i)+\eps_i),
$$
$$
\Phi(\xbf; \epsbf): = \sum_{i\in\Gcal}\phi_i(\xbf_i;\eps_i) \  \text{ and }\  \phi_i(\xbf_i)=\phi_i(\xbf_i; \mathbf{0}).
$$
The following theorem shows that the pointwise convergence of $J(\xbf;\epsbf)$ to $J_0(\xbf)$ on $X$ as $\epsbf\to\mathbf{0}$.  
\begin{theorem}\label{thm.j approx j0}
	For any $\xbf\in X$ and $\eps\in\mathbb{R}_{++}$, it holds true that 
	\[J_0(\xbf)  \le   J(\xbf; \epsbf)  \le  J_0(\xbf) 
	+ \sum_{c_i(\xbf_i) = 0} r_i(\eps_i) +\sum_{ c_i(\xbf_i) > 0} r'(c_i(\xbf_i)) \eps_i.
\]
	This implies that $J(\xbf;\epsbf)$ pointwise convergence to $J_0(\xbf)$ on $X$ as $\epsbf\to\mathbf{0}$.
\end{theorem}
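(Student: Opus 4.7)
The plan is to notice that $f(\xbf)$ and $\delta(\xbf|X)$ appear identically on both sides, so
\[
J(\xbf;\epsbf) - J_0(\xbf) = \sum_{i \in \Gcal}\bigl[r_i(c_i(\xbf_i)+\eps_i) - r_i(c_i(\xbf_i))\bigr],
\]
and the entire statement reduces to sandwiching each summand and then invoking a squeeze argument to pass to the limit $\epsbf \to \bm{0}$.

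For the lower bound $J_0(\xbf) \le J(\xbf;\epsbf)$, I would appeal to Assumption~\ref{basic assumption}(iv), which guarantees $c_i(\xbf_i) \ge 0$, so that both $c_i(\xbf_i)$ and $c_i(\xbf_i)+\eps_i$ lie in $\mathbb{R}_+$ with the latter strictly larger. Assumption~\ref{basic assumption}(iii) supplies strict monotonicity of $r_i$ on $\mathbb{R}_+$, whence each summand is nonnegative.

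For the upper bound, I would split the index set according to whether $c_i(\xbf_i)$ vanishes. When $c_i(\xbf_i) = 0$, the normalization $r_i(0)=0$ reduces the summand to $r_i(\eps_i)$. When $c_i(\xbf_i) > 0$, the point $c_i(\xbf_i)$ lies in the region where $r_i$ is smooth (Assumption~\ref{basic assumption}(iii)), and concavity of $r_i$ on $\mathbb{R}_+$ yields the tangent-line inequality $r_i(c_i(\xbf_i)+\eps_i) - r_i(c_i(\xbf_i)) \le r_i'(c_i(\xbf_i))\,\eps_i$. Summing the two cases delivers the claimed upper bound.

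Finally, for the pointwise convergence, I would squeeze $J(\xbf;\epsbf) - J_0(\xbf)$ between $0$ and the upper-bound expression and send $\epsbf \to \bm{0}$. Terms of the form $r_i'(c_i(\xbf_i))\,\eps_i$ vanish trivially since the derivative factor is a fixed finite number. The only mildly delicate piece is showing $r_i(\eps_i) \to 0$ as $\eps_i \to 0^+$: this is where I expect the main obstacle, but it is resolved by observing that concavity of $r_i$ on $\mathbb{R}_+$ implies upper semicontinuity at $0$, so $\limsup_{\eps_i \to 0^+} r_i(\eps_i) \le r_i(0) = 0$, while strict monotonicity together with $r_i(0)=0$ forces $r_i(\eps_i) > 0$; these two bounds pinch the limit to zero. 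The proof therefore rests entirely on the monotonicity, concavity, and normalization properties of $r_i$, with no further ingredients required.
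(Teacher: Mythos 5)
Your argument follows essentially the same route as the paper's proof: the lower bound from monotonicity of $r_i$ on $\mathbb{R}_+$ (using $c_i(\xbf_i)\ge 0$), the upper bound by splitting the index set on whether $c_i(\xbf_i)$ vanishes and applying the concavity tangent-line inequality $r_i(z)\le r_i(z_0)+r_i'(z_0)(z-z_0)$ at $z_0=c_i(\xbf_i)>0$, and a squeeze to get pointwise convergence. The one flawed step is your justification that $r_i(\eps_i)\to 0$: concavity of $r_i$ on $\mathbb{R}_+$ does \emph{not} imply upper semicontinuity at $0$. The chord inequality $r_i(\eps)\ge (1-\eps/\delta)\,r_i(0)+(\eps/\delta)\,r_i(\delta)$ gives only \emph{lower} semicontinuity from the right at the endpoint; a concave function may jump down at the boundary of its domain. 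For instance, $r(0)=0$ and $r(x)=1+x$ for $x>0$ is concave, strictly increasing on $\mathbb{R}_+$, smooth off the origin, and Fr\'echet subdifferentiable at $0$, yet $\lim_{\eps\to 0^+}r(\eps)=1\neq r(0)$. What you actually need is continuity of $r_i$ at $0$ from the right; this holds for every penalty listed in the paper and is tacitly used in the paper's own proof (which simply asserts $\sum_{c_i(\xbf_i)=0}r_i(\eps_i)\to 0$ without comment), but it does not follow from concavity, monotonicity, and the normalization $r_i(0)=0$ alone, so it should be stated as an additional (mild) hypothesis or verified case by case rather than derived the way you propose. Everything else in your write-up matches the paper's argument.
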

\begin{proof}
	The first inequality is trivial, so we only have to show the second inequality.  
	Since $r(\dotv )$ is concave on $\mathbb{R}_+$, we have
	\beq\label{eq.lem.2.1}
	r_i(z) \le r_i(z_0) + r_i'(z_0) (z-z_0)\quad \text{ for any } z, z_0 \in\mathbb{R}_+,\eeq
	Therefore,  
	\[\begin{aligned}
	J(\xbf; \epsbf)  = & f(\xbf) + \sum_{i\in\Gcal} r_i(c_i(\xbf_i)+\eps_i) \\ 
	= & f(\xbf) +  \sum_{c_i(\xbf_i) = 0} r_i(\eps_i)  +    \sum_{ c_i(\xbf_i) > 0} r_i(c_i(\xbf_i)+\eps_i) \\
	\le &  f(\xbf)   + \sum_{c_i(\xbf_i) = 0} r_i(\eps_i) + \sum_{c_i(\xbf_i) > 0} r_i(c_i(\xbf_i)) +  \sum_{c_i(\xbf_i) > 0}   r'(c_i(\xbf_i)) \eps_i \\
	= & J_0(\xbf)  + \sum_{c_i(\xbf_i) = 0} r_i(\eps_i) +\sum_{ c_i(\xbf_i) > 0} r'(c_i(\xbf_i)) \eps_i,
	\end{aligned}
	\]
	where the inequality follows by~\eqref{eq.lem.2.1}. This completes the first statement.
	
	On the other hand, since 
	\[ \lim\limits_{\bm{\eps}\to 0}\sum_{c_i(\xbf_i) = 0} r_i(\eps_i) +\sum_{ c_i(\xbf_i) > 0} r'(c_i(\xbf_i)) \eps_i = 0,
	\] 
	it holds
	\[\lim\limits_{\epsbf\to \mathbf{0}} J(\xbf; \epsbf) = J_0(\xbf),\quad \xbf\in X.
	\]
\end{proof}

\subsection{Adaptively Iterative Reweighted Algorithm}

A convex and smooth function $G_{(\tilde{\xbf}, \tilde \epsbf)}(\xbf)$ can be derived as an approximation of $J(\tilde \xbf,\tilde \epsbf)$ at $\tilde\xbf$  
by linearizing $r_i$ at $c_i(\tilde\xbf_i)+\tilde \eps_i$ to have the subproblem
\beq\label{prob.G}
G_{(\tilde{\xbf}, \tilde \epsbf)}(\xbf) :=  f(\xbf)+   \sum_{i\in\Gcal} w_i(\tilde \xbf_i, \tilde \eps_i) c_i(\xbf_i )  + \delta(\xbf|X) ,
\eeq
where the weights are given by  
\[ w_i(\xbf,\eps_i)=  r_i'(c_i(\xbf_i)+\eps_i),\quad i\in\Gcal.\]
Note that the relaxation parameter can be simply chosen as $\epsbf = \mathbf{0}$ if $r$ is smooth at 0.

At iterate $\xbf^k$, the new iterate   is   obtained by
\[ \xbf^{k+1} \in \arg\min_{\xbf}   G_{(\xbf^k, \epsbf^k)}(\xbf).\]
Therefore,  $\xbf^{k+1} $ satisfies optimality condition
\[ 0\in \partial  G_{(\xbf^k,\epsbf^k)}(\xbf^{k+1} ) .\]
The relaxation parameter is selected such that  $\epsbf^{k+1} \le \epsbf^k$ and possibly  driven to 0 as the algorithm proceeds.

Our proposed \emph{A}daptively \emph{I}terative \emph{R}eweighted algorithm for nonconvex and nonsmooth sparse optimization problems is stated in Algorithm~\ref{alg.1}.

\begin{algorithm}
	\caption{AIR: Adaptively Iterative Reweighted}
	\label{alg.1}
	\begin{algorithmic}[1]
		\State (Initialization)  Choose $\xbf^0\in X$ and  $\epsbf^0\in\mathbb{R}_{++}^n$. Set $k=0$.
		\State (Subproblem Solution)  Compute new iterate
		\[ \xbf^{k+1}\in\arg\min   G_{(\xbf^k, \epsbf^k)}(\xbf).\]
		\State (Reweighting)
		Choose $\epsbf^{k+1} \in(\mathbf{0},  \epsbf^k]$.
		\State Set $k\gets k+1$. Go to Step 2.
	\end{algorithmic}
\end{algorithm}

\subsection{$\ell_1$-Algorithm \& $\ell_2$-Algorithm}\label{sub.L12}
In this subsection, we describe the details of how to construct $G_{(\tilde \xbf,\tilde\epsbf)}(\xbf)$ for the nonconvex and nonsmooth sparsity-inducing functions 
\eqref{f.exp}--\eqref{f.mcp}  in \S~\ref{s.Pd}.
Notice that the relaxation parameter  $\eps_i$ could set as $0$ if $\lim\limits_{c_i\to0+}r_i'(c_i)<+\infty$.   
For simplicity, denote 
$ \tilde w_i = w_i(\tilde \xbf_i, \tilde\eps_i).$  In Table~\ref{tab.1},  we provide the explicit forms of the weights $\tilde w_i$ at $(\tilde \xbf_i,\tilde \epsilon_i)$ when choosing  $c_i(\xbf_i) = \| \xbf_i \|_1$ and   $c_i(\xbf_i) = \|\xbf_i\|_2^2$  for each case, so that the corresponding subproblem is an $\ell_1$-norm sparsity-inducing problem and an $\ell_2$-norm sparsity-inducing problem 
\[\begin{aligned}
G_{(\tilde \xbf,\tilde\epsbf)}(\xbf) & = f(\xbf) + \sum\limits_{i\in\Gcal} \tilde w_i \|\xbf_i\|_1+ \delta(\xbf|X)\quad\text{and }\\
G_{(\tilde \xbf,\tilde\epsbf)}(\xbf)  & = f(\xbf) + \sum\limits_{i\in\Gcal} \tilde w_i \|\xbf_i\|_2^2 +\delta(\xbf|X).
\end{aligned}
\]
For each sparsity-inducing function, we consider $c_i(\xbf_i)= \|\xbf_i \|_1$ in the first row and $c_i(\xbf_i)=\|\xbf_i\|_2^2$ in the second row.  We also list the properties of the 
$r_i$ with $c_i\to\infty$ and its side-derivative of $r_i$ at $0$ in the fourth and fifth columns. This is because these properties can lead to different behaviors of each AIR as shown in the theoretical analysis.  

\begin{table}[h]
	\caption{Different AIR weights based on different choice of $r_i$ and $c_i$.}
	\label{tab.1}
	\centering
	\begin{tabular}{|C{3em}|C{5.5em}|C{13em}|C{3em}|C{3em}|}\hline
		$\phi_i $       &   $r_i(c_i)$   &   $\tilde w_i$ & $r_i(\infty)$ & $r_i'(0+)$  \\ \hline
		\multirow{2}{*}{\eqref{f.exp}}      &  $1-e^{-pc_i}$  & $pe^{-p(\|\tilde \xbf_i\|_1)}$  & $<\infty$ & $<\infty$ \\  \cline{2-5}
		&  $1-e^{-p\sqrt{c_i}}$ & $\frac{pe^{-p\tilde \xbf_i}}{2\tilde \xbf_i}$ &  $<\infty$ & $<\infty$ \\ \hline      
		\multirow{2}{*}{\eqref{f.lp}}       &  $c_i^p$  & $p(\|\tilde \xbf_i\|_1+\tilde\epsilon_i)^{p-1}$  &   $+\infty$ & $+\infty$ \\  \cline{2-5}
		&  $c_i^{p/2}$ & $\frac{p}{2}(\|\tilde \xbf_i\|_2^2+\epsilon_i)^{\frac{p}{2}-1}$ &    $+\infty$& $+\infty$ \\ \hline     
		\multirow{2}{*}{\eqref{f.log} }      &  $\log(1+pc_i)$      & $\frac{p}{1+p\|\tilde \xbf_i\|_1}$  &  $+\infty$ & $<\infty$ \\  \cline{2-5}
		&  $ \log(1+p\sqrt{c_i})$ & $\frac{p}{2\sqrt{\|\tilde \xbf_i\|_2^2+\tilde\epsilon_i}(1+p\sqrt{\|\tilde \xbf_i\|_2^2+\tilde\epsilon_i})}$ &  $+\infty$ & $+\infty$ \\ \hline  
		\multirow{2}{*}{\eqref{f.inv}}    &  $\frac{c_i}{c_i+p}$      & $\frac{p}{(\|\tilde \xbf_i\|_1+p)^2}$  &  $<\infty$ & $<\infty$ \\  \cline{2-5}
		&  $\frac{\sqrt{c_i}}{\sqrt{c_i}+p}$ & $\frac{p}{2\sqrt{\|\tilde \xbf_i\|_2^2+\epsilon_i}(\sqrt{\|\tilde \xbf_i\|_2^2+\epsilon_i}+p)^{2}}$ &  $<\infty$ & $ +\infty$ \\ \hline  
		\multirow{2}{*}{\eqref{f.atan}  }  &  $\frac{c_i}{c_i+p}$      & $\frac{p}{1+p^2(\|\tilde \xbf_i\|_1)^2}$  &  $<\infty$ & $<\infty$ \\  \cline{2-5}
		&  $c_i^{p/2}$ & $\frac{p}{2\sqrt{\|\tilde \xbf_i\|_2^2+\epsilon_i}(1+p^2(\|\tilde \xbf_i\|_2^2+\epsilon_i))}$ &  $<\infty$ & $ +\infty$ \\ \hline            
		
	\end{tabular}
\end{table} 

As for SCAD and MCP, the explicit forms of $r_i$ are not necessary to be known, but it can be easily verified using $r_i'$ that Assumption~\eqref{basic assumption} 
still holds true.  The  reweighted   $\ell_1$ subproblem for SCAD has weights  
\[ \tilde w_i =  \lambda\{\mathbb{I}(|\tilde x_i |+\tilde\eps_i\le \lambda )   + \frac{(a\lambda - \|\tilde \xbf_i\|_1 - \tilde\eps_i)_+}{(a-1)\lambda} \mathbb{I}(\|\tilde \xbf_i\|_1+ \tilde\eps_i>\lambda)\}.\]
The weights of reweighted $\ell_2$ subproblem for SCAD are   
\[\begin{aligned} \tilde w_i =  &  \frac{\lambda}{2\sqrt{\|\tilde \xbf_i\|_2^2+\epsilon_i)}}\{\mathbb{I}(\sqrt{\|\tilde \xbf_i\|_2^2+\epsilon_i)}\le\lambda) \\
& + \frac{(a\lambda-\sqrt{\|\tilde \xbf_i\|_2^2+\epsilon_i)})_{+}}{(a-1)\lambda}\mathbb{I}(\sqrt{\|\tilde \xbf_i\|_2^2+\epsilon_i)}>\lambda)\}.
\end{aligned}\]
As for MCP, the reweighted $\ell_1$ subproblem has weights 
\[\tilde w_i  =  (a\lambda-\|\tilde \xbf_i\|_1-\tilde\eps_i)_+/a,\]
and the weights for reweighted $\ell_2$ subproblem are 
\[\tilde w_i =  (a\lambda-\sqrt{\|\tilde \xbf_i\|_2^2+\epsilon_i})_+/a.\]

\section{Convergence analysis}
\label{s.Conv}

In this section, we analyze the global convergence   of our proposed AIR. First we provide a unified first-order optimality condition for the constrained nonconvex and nonsmooth sparse optimization problem~\eqref{f.sparse}. Then we establish the global convergence anlaysis followed by the existence of cluster points.

For simplicity, denote 
$w^k_i = w_i(\xbf^k, \eps_i^k),  \wbf_i^k = w_i^k \ebf_{n_i},$
$ \wbf^k = [ \wbf_1^k; \wbf_2^k; \ldots; \wbf_m^k],$ and  $\Wbf^k = \text{diag}(\wbf^k),$
and so forth.

\subsection{First-order Optimality Condition}
In this subsection, we    derive conditions to characterize the optimal solution of \eqref{f.sparse}. 
Due to the nonconvex and nonsmooth nature of the sparsity-inducing function, we use  Fr\'echet subdifferentials as the major tool in our analysis. 
Some important properties of Fr\'echet subdifferentials derived in~\cite{kruger2003frechet}  that will be used in this paper are  summarized below. Part $(i)$-$(iv)$ are Proposition 1.1, 1.2, 1.10, 1.13 and  1.18 in~\cite{kruger2003frechet}, respectively.
\begin{proposition}\label{prop.sub}
	The following statements about Fr\'echet subdifferentials is true. 
	\begin{enumerate}
		\item[(i)] If $f$ is 
		differentiable at $\xbf$ with gradient  $\nabla f(\xbf)$, then $\partial_F f(\xbf) = \{\nabla f(\xbf)\}$. 
		\item[(ii)]  If $f$ is convex, then $\partial_F f(\xbf) = \partial f(\xbf)$.
		\item[(iii)]  If $f$ is Fr\'echet subdifferential at $\xbf$ and attains local minimum at $\xbf$, then 
		\[ \mathbf{0}\in\partial_F f(\xbf).\]  
		\item[(iv)] Let $r(\cdot )$ be Fr\'echet subdifferentiable at $c^*=c(\xbf^*)$ with $c(\xbf)$ being convex, 
		then $r\comp c(\xbf)$ is Fr\'echet subdifferentiable at $\xbf^*$ and that 
		\[y^*\partial c(\xbf^*) \subset \partial_F r\comp c(\xbf^*)\]
		for any $y^*\in\partial_F r(c^*)$. 
		\item[(v)] $ N(\xbf |   X) = \partial_F\delta(\xbf |   X)$ if $X$ is   closed and convex. 
	\end{enumerate}
\end{proposition}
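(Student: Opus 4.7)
The proof strategy is to verify each of the five items directly from the definition of the Fr\'echet subdifferential; together they collect standard facts whose detailed proofs appear in \cite{kruger2003frechet} as Propositions 1.1, 1.2, 1.10, 1.13 and 1.18. I would work through them in the order (i), (ii), (iii), (v), (iv), from easiest to hardest.

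For (i), differentiability of $f$ at $\xbf$ gives $f(\ubf) - f(\xbf) - \la \nabla f(\xbf), \ubf - \xbf\ra = o(\|\ubf - \xbf\|)$, so $\nabla f(\xbf)$ satisfies the defining liminf condition with value zero. Uniqueness follows by taking any $\xbf^* \in \partial_F f(\xbf)$, evaluating the liminf along $\ubf = \xbf + t(\ybf - \xbf)$ for arbitrary $\ybf$ and $t \to 0^+$, and applying the resulting inequality to both $\ybf$ and $2\xbf - \ybf$ to force $\xbf^* = \nabla f(\xbf)$. For (ii), the convex subgradient inequality $f(\ubf) - f(\xbf) \ge \la \xbf^*, \ubf - \xbf\ra$ makes the defining liminf nonnegative trivially, giving $\partial f(\xbf) \subseteq \partial_F f(\xbf)$; for the reverse inclusion I would feed $\ubf = \xbf + t(\ybf - \xbf)$, $t \to 0^+$, into the liminf and use convexity along the ray, via $(f(\xbf + t(\ybf - \xbf)) - f(\xbf))/t \le f(\ybf) - f(\xbf)$ for $t \in (0,1]$, to extend the local bound to the global subgradient inequality. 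For (iii), local minimality makes the numerator in the Fr\'echet definition pointwise nonnegative near $\xbf$, so $\bm{0}$ satisfies the test. For (v), the liminf need only be taken over $\ubf \in X$ (outside $X$ the numerator is $+\infty$); testing with $\ubf = \xbf + t(\ybf - \xbf)$, $\ybf \in X$ arbitrary, convexity of $X$ reduces the Fr\'echet condition to $\la \xbf^*, \ybf - \xbf\ra \le 0$ for all $\ybf \in X$, i.e.\ $\xbf^* \in N(\xbf|X)$; the reverse inclusion is direct.

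The main obstacle is the chain rule (iv), where $r$ is only Fr\'echet subdifferentiable rather than smooth or monotone. For $y^* \in \partial_F r(c^*)$ and $\zbf \in \partial c(\xbf^*)$ the plan is to write
\[
r(c(\ubf)) - r(c^*) - \la y^* \zbf, \ubf - \xbf^*\ra = \Delta_1(\ubf) + y^* \Delta_2(\ubf),
\]
with $\Delta_1(\ubf) = r(c(\ubf)) - r(c^*) - y^*(c(\ubf) - c^*)$ and $\Delta_2(\ubf) = c(\ubf) - c(\xbf^*) - \la \zbf, \ubf - \xbf^*\ra$. Convexity of $c$ gives $\Delta_2(\ubf) \ge 0$, and the Fr\'echet property of $y^*$ together with local Lipschitzness of convex $c$ (so that $|c(\ubf) - c^*| = O(\|\ubf - \xbf^*\|)$) gives $\Delta_1(\ubf) = o(\|\ubf - \xbf^*\|)$. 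Dividing by $\|\ubf - \xbf^*\|$ and taking the liminf then yields the desired inclusion, once the sign of $y^*$ is compatible with the direction of the convex inequality---a condition consistent with the concave sparsity-inducing $r_i$ used in this paper. The most delicate step is controlling the outer $o(\cdot)$ through the nonlinear change of variable $\ubf \mapsto c(\ubf)$ together with the sign bookkeeping for $y^*$; for the fully general statement I would simply invoke the corresponding result in \cite{kruger2003frechet}.
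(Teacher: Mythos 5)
The paper itself supplies no proof of this proposition: it is presented purely as a summary of Propositions 1.1, 1.2, 1.10, 1.13 and 1.18 of \cite{kruger2003frechet}, so your direct verifications already go beyond what the paper records. Parts (i)--(iii) and (v) are correct and standard: the ray substitution $\ubf=\xbf+t(\ybf-\xbf)$ combined with monotonicity of the convex difference quotient handles (ii), and restricting the liminf to $\ubf\in X$ (the numerator being $+\infty$ outside) reduces (v) to the normal-cone inequality exactly as you say.

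The genuine gap is in (iv). Your decomposition $r(c(\ubf))-r(c^*)-\la y^*\zbf,\ubf-\xbf^*\ra=\Delta_1(\ubf)+y^*\Delta_2(\ubf)$ is the right idea, and the bound $\Delta_1(\ubf)\ge -o(\|\ubf-\xbf^*\|)$ (note the Fr\'echet property only gives a one-sided estimate, not $\Delta_1=o(\cdot)$, but one-sided is all you need) does follow from local Lipschitz continuity of the convex function $c$. However, $y^*\Delta_2(\ubf)\ge 0$ requires $y^*\ge 0$, and you cannot dismiss the negative case as irrelevant: Lemma~\ref{sub diff r} applies this very item with $y_i$ ranging over the full symmetric interval $[-y_i^*,y_i^*]\subset\partial_F r_i(0)$, including strictly negative multipliers. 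Moreover, for arbitrary $y^*\in\partial_F r(c^*)$ the statement as written is false in general: take $r(t)=-t$ and $c(\xbf)=\|\xbf\|$ at $\xbf^*=\bm{0}$, where $\partial_F(r\comp c)(\bm{0})=\emptyset$ while $y^*\partial c(\bm{0})$ is the whole unit ball. So either one restricts to $y^*\ge 0$ (or nondecreasing $r$), or one must exploit the extra structure present where the paper actually uses negative multipliers, namely $c\ge 0$ with $c(\xbf^*)=0$, the control $|\la\zbf,\ubf-\xbf^*\ra|\le\max\{c(\ubf),c(-\ubf)\}$, and $r(t)/t\to y^*$ as $t\to 0^+$; none of this enters your decomposition. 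Deferring the general case to \cite{kruger2003frechet} is what the paper does, but the version you can prove by your argument should carry the hypothesis $y^*\ge 0$, with a separate argument for the degenerate point $c^*=0$.
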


The properties of Fr\'echet subdifferentials in Proposition~\ref{prop.sub} can be used to characterize the optimal solution 
of \eqref{f.sparse}. The following theorem is straightforward from Proposition~\ref{prop.sub}, which describes  the necessary optimality condition of problem \eqref{f.sparse}.
\begin{theorem}\label{thm.opt cond}
	If   \eqref{prob.J} attains a local minimum at $\xbf$, then it holds true that 
	\begin{equation}\label{optimality}
	\mathbf{0} \in \partial_F J(\xbf; \epsbf) = \nabla f(\xbf) +  \partial_F\Phi(\xbf; \epsbf) + N(\xbf |  X).
	\end{equation}
	
\end{theorem}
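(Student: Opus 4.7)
The plan is to derive the necessary condition by unfolding $\partial_F J(\xbf;\epsbf)$ into its constituent pieces via the calculus rules listed in Proposition~\ref{prop.sub}, and then invoking the Fermat-type rule.

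First, I would apply Proposition~\ref{prop.sub}(iii) directly to the relaxed objective $J(\cdot;\epsbf)$. Since $\xbf$ is a local minimizer and $J(\cdot;\epsbf)$ is Fr\'echet subdifferentiable at $\xbf$ (which will be confirmed by the structural decomposition below), the Fermat rule gives $\bm{0}\in\partial_F J(\xbf;\epsbf)$. This settles the ``$\in$'' part of the conclusion.

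Second, I would establish the equality $\partial_F J(\xbf;\epsbf) = \nabla f(\xbf) + \partial_F\Phi(\xbf;\epsbf) + N(\xbf|X)$ by splitting $J = f + \Phi(\cdot;\epsbf) + \delta(\cdot|X)$ and treating each summand. The smooth term is handled by Assumption~\ref{basic assumption}(ii) together with Proposition~\ref{prop.sub}(i), which gives $\partial_F f(\xbf) = \{\nabla f(\xbf)\}$ and, because $f$ is $C^1$, yields the \emph{exact} sum rule $\partial_F(f+g)(\xbf) = \nabla f(\xbf)+\partial_F g(\xbf)$ via a direct Taylor expansion inside the liminf defining the Fr\'echet subdifferential. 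The indicator contributes $\partial_F\delta(\xbf|X) = N(\xbf|X)$ by Assumption~\ref{basic assumption}(i) and Proposition~\ref{prop.sub}(v). For the composite $\Phi(\xbf;\epsbf) = \sum_{i\in\Gcal} r_i(c_i(\xbf_i)+\eps_i)$, Proposition~\ref{prop.sub}(iv) applies groupwise: each $r_i$ is Fr\'echet subdifferentiable on $\mathbb{R}_+$ by Assumption~\ref{basic assumption}(iii) and each $c_i$ is convex by Assumption~\ref{basic assumption}(iv), so $\partial_F\Phi(\xbf;\epsbf)$ is non-empty and is obtained by summing the chain-rule contributions across groups.

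The main obstacle is the final sum-rule step combining the nonconvex composite $\Phi(\cdot;\epsbf)$ with the convex indicator $\delta(\cdot|X)$, since for two nonsmooth functions the Fr\'echet sum rule generically provides only the inclusion $\partial_F \Phi(\xbf;\epsbf) + N(\xbf|X) \subset \partial_F(\Phi+\delta(\cdot|X))(\xbf)$. To upgrade this to equality I would exploit the fact that $\delta(\cdot|X)$ is convex and lower-semicontinuous, and split the liminf inequality in the definition of $\partial_F$ along two regimes: directions $\ubf\to\xbf$ with $\ubf\in X$ (on which the indicator vanishes and the asymptotic inequality characterizes a Fr\'echet subgradient of $\Phi$) and directions $\ubf\to\xbf$ leaving $X$ (on which convexity of $X$ forces the residual vector to lie in $N(\xbf|X)$). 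Combining these two with the smooth contribution $\nabla f(\xbf)$ yields the claimed additive decomposition, and the Fermat rule from step one then gives the desired inclusion of $\bm{0}$.
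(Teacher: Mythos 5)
Your overall route---Fermat's rule from Proposition~\ref{prop.sub}(iii) plus a term-by-term decomposition of $\partial_F J$ using parts (i), (iv) and (v)---is essentially the route the paper takes: the paper offers no written proof, declaring the theorem ``straightforward from Proposition~\ref{prop.sub}'' and asserting the sum-rule equality $\partial_F J(\xbf;\epsbf)=\nabla f(\xbf)+\partial_F\Phi(\xbf;\epsbf)+N(\xbf|X)$ without justification (the same identity reappears, again unproved, inside the proof of Lemma~\ref{lem.1aaa}). Your handling of the smooth term (exact splitting because $f$ is $C^1$), of the indicator via Proposition~\ref{prop.sub}(v), and of the blockwise structure of the separable composite is correct, and you have correctly isolated the one step that is genuinely not routine: for the two nonsmooth summands $\Phi(\cdot;\epsbf)$ and $\delta(\cdot|X)$ the Fr\'echet sum rule only yields $\partial_F\Phi(\xbf;\epsbf)+N(\xbf|X)\subset\partial_F\bigl(\Phi(\cdot;\epsbf)+\delta(\cdot|X)\bigr)(\xbf)$, whereas the theorem---and its use as a necessary condition, since Fermat's rule places $\bm{0}$ in $\partial_F J(\xbf;\epsbf)$---requires the reverse inclusion.

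Your proposed repair of that step, however, does not go through. Splitting the $\liminf$ in the definition of $\partial_F$ into directions $\ubf\in X$ and $\ubf\notin X$ yields nothing from the second regime: for $\ubf\notin X$ the difference quotient is $+\infty$, so those directions impose no constraint at all and certainly do not ``force the residual vector to lie in $N(\xbf|X)$.'' From the first regime you learn only that a given $\zbf\in\partial_F\bigl(\Phi(\cdot;\epsbf)+\delta(\cdot|X)\bigr)(\xbf)$ satisfies $\liminf_{\ubf\to\xbf,\,\ubf\in X}\,[\Phi(\ubf;\epsbf)-\Phi(\xbf;\epsbf)-\langle\zbf,\ubf-\xbf\rangle]/\|\ubf-\xbf\|\ge 0$, i.e.\ that $\zbf$ is a Fr\'echet subgradient of the \emph{restriction} of $\Phi$ to $X$. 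Decomposing such a $\zbf$ as $\zbf_1+\zbf_2$ with $\zbf_1\in\partial_F\Phi(\xbf;\epsbf)$ and $\zbf_2\in N(\xbf|X)$ is precisely the assertion to be proved, and it fails for general Fr\'echet-subdifferentiable functions; exact sum rules of this kind need a regularity or qualification hypothesis (e.g.\ $\Phi(\cdot;\epsbf)$ locally Lipschitz and subdifferentially regular at $\xbf$). Such regularity is plausible here because each $\phi_i$ is a concave increasing function of a convex function, but it requires an argument and is delicate precisely in the case $\eps_i=0$, $\xbf_i=\bm{0}$, $y_i^*<+\infty$. To make your write-up sound, either prove regularity of $\Phi(\cdot;\epsbf)$ in this structured setting, or weaken the displayed equality to the inclusion $\partial_F J(\xbf;\epsbf)\subset\nabla f(\xbf)+\partial_F\Phi(\xbf;\epsbf)+N(\xbf|X)$ and prove that directly---which is all the necessary optimality condition requires, and is strictly more than the paper itself establishes (the convergence analysis later only ever invokes the opposite, easy inclusion through Lemma~\ref{lem.1aaa}).
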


Next we shall     further investigate the properties of $\partial_F \phi(\xbf; \epsbf)$. 
\begin{lemma}\label{lem.1aaa}
	It holds that 
	\[\nabla f(\xbf) +   \prod_{i\in\Gcal} y_i \partial c_i(\xbf_i)  +  N(\xbf|X)  \subset  \partial_F J(\xbf; \epsbf)\]  
	for any $y_i\in\partial_F r_i(c_i(\xbf_i)+\eps_i)$.
\end{lemma}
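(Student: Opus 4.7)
The plan is to realize a Fréchet subgradient of $J(\xbf;\epsbf)=f(\xbf)+\sum_{i\in\Gcal}r_i(c_i(\xbf_i)+\eps_i)+\delta(\xbf|X)$ by assembling Fréchet subgradients of the individual summands and invoking two ingredients: the chain rule in Proposition~\ref{prop.sub}(iv) on each composite $r_i\comp(c_i(\cdot)+\eps_i)$, and the elementary sum rule for Fréchet subdifferentials.

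First, I would fix any $y_i\in\partial_F r_i(c_i(\xbf_i)+\eps_i)$ for each $i\in\Gcal$. Since $c_i+\eps_i$ is convex on $\mathbb{R}^{n_i}$ and $r_i$ is Fréchet subdifferentiable at $c_i(\xbf_i)+\eps_i$ (the extra constant $\eps_i$ only shifts the argument and does not affect subdifferentiability), Proposition~\ref{prop.sub}(iv) yields $y_i\,\partial c_i(\xbf_i)\subset \partial_F\bigl(r_i\comp(c_i+\eps_i)\bigr)(\xbf_i)$ in $\mathbb{R}^{n_i}$. Because $\phi_i(\xbf;\eps_i)$ as a function of the full vector $\xbf\in\mathbb{R}^n$ depends only on the block $\xbf_i$, the corresponding Fréchet subdifferential in $\mathbb{R}^n$ is obtained by padding with zeros in the other blocks; so for any $g_i\in y_i\,\partial c_i(\xbf_i)$, the vector $(0,\dots,0,g_i,0,\dots,0)$ lies in $\partial_F\phi_i(\cdot;\eps_i)(\xbf)$.

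Second, I would apply the elementary sum rule. For any two functions $h_1,h_2$ finite at $\xbf$, if $v_j\in\partial_F h_j(\xbf)$ then $v_1+v_2\in\partial_F(h_1+h_2)(\xbf)$, which follows immediately from the defining liminf inequality together with $\liminf(a_n+b_n)\ge \liminf a_n+\liminf b_n$. Applied inductively to the smooth term $f$ (whose Fréchet subdifferential is $\{\nabla f(\xbf)\}$ by Proposition~\ref{prop.sub}(i)), the $m$ group pieces $\phi_i(\cdot;\eps_i)$, and the characteristic $\delta(\cdot|X)$ (whose Fréchet subdifferential is $N(\xbf|X)$ by (v)), this gives
\[
\nabla f(\xbf)+\prod_{i\in\Gcal} y_i\,\partial c_i(\xbf_i)+N(\xbf|X)\subset \partial_F J(\xbf;\epsbf),
\]
which is the claim, once one identifies $\prod_{i\in\Gcal}\mathbb{R}^{n_i}$ with $\mathbb{R}^n$ in the usual way.

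The only real bookkeeping to manage is the passage from the blockwise chain rule, which operates on $\xbf_i\in\mathbb{R}^{n_i}$, to the full-space statement in $\mathbb{R}^n$: one needs that extending $\phi_i$ trivially in the complementary coordinates preserves the Fréchet subdifferential via zero padding, and that the Cartesian product in the lemma's statement corresponds exactly to summing these zero-padded contributions across groups. Beyond this, no delicate constraint qualification is required, because we only claim the ``easy'' inclusion of the sum rule; the converse inclusion (which typically fails without additional assumptions) is not asserted.
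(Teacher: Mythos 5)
Your proof is correct and follows essentially the same route as the paper's: apply Proposition~\ref{prop.sub}(iv) blockwise to each $\phi_i(\cdot;\eps_i)=r_i(c_i(\cdot)+\eps_i)$ and then combine with $\nabla f$ and $N(\xbf|X)$ via the additive/product structure of $J$. The only cosmetic difference is that you justify the ``easy'' inclusion of the sum rule directly from the liminf definition, whereas the paper asserts the separable product decomposition of $\partial_F\Phi$ outright; both are valid and yield the same conclusion.
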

 \begin{proof}
 	
 	Note that $\phi(\xbf; \epsbf)$ takes structure 
 	\[\Phi(\xbf;\epsbf) = \sum_{i\in\Gcal} \phi_i(\xbf_i;\eps_i)\quad \text{ with } \phi_i(\xbf_i;\eps_i)= r_i( c_i(\xbf_i)+\eps_i).\]  Thus we can write the Fr\'echet subdifferentials of $\Phi$
 	$$
 	\begin{aligned}
 	\partial_F \Phi(\xbf;\epsbf) &=\prod_{i\in\Gcal}  \partial_F \phi_i(\xbf_i;\eps_i)   \\
 	&= \partial_F \phi_1(\xbf_1;\eps_1)\times\ldots\times \partial_F \phi_m(\xbf_m;\eps_m),
 	\end{aligned} 
 	$$
 	meaning that 
 	\[  \partial_F J(\xbf;\epsbf) =  \nabla f(\xbf) +   \prod_{i\in\Gcal}  \partial_F \phi_i(\xbf_i;\eps_i) 
 	+ N(\xbf|X).\]
 	On the other hand, every $c_i$ is assumed to be convex. From Proposition~\ref{prop.sub}, we know that 
 	\[ y_i  \partial c_i(\xbf_i) \subset   \partial_F \phi_i(\xbf_i;\eps_i),\quad \forall y_i\in\partial_F r_i(c_i(\xbf_i)+\eps_i),\]
 	completing the proof. 
 \end{proof}

 If  $c_i(\xbf_i) > 0$ or $\eps_i > 0$,    $r_i$ is   differentiable at  $c_i+\eps_i$
 so that $\partial_F \phi_i(\xbf_i;\eps_i) = r'_i(c_i(\xbf_i^*)+\eps_i) \partial  c_i(\xbf^*)$ by Proposition~\ref{prop.sub}.  
 Of particular interests are the properties of $\partial_F r_i(0)$.  
 Notice that $r'_i$ is decreasing on $\mathbb{R}_{++}$.  
 We investigate  $\partial_F \phi_i(\xbf_i;\eps_i) $ bases on  the limits (possibly infinite) in the lemma below. 
 
 \begin{lemma}\label{sub diff r}
 	Let $ y_i^*:= \lim\limits_{ c_i \to 0^+} r'_i(c_i)  \ge 0$. It holds true that 
 	\[
 	\begin{cases}
 	\partial_F r_i( c_i)  =  r'_i(c_i)    & \text{     if  }\    c_i > 0\\
 	\ \partial_F r_i (0) =  [-  y_i^*,   y_i^*],    &\text{     if  }\    y_i^* < +\infty,\\
 	\ \partial_F r_i (0) =  \mathbb{R},                       &\text{   if }\    y_i^* = +\infty,
 	\end{cases}
 	\]
 	so that 
 	\begin{enumerate}
 		\item If $c_i(\xbf^*_i) + \eps_i > 0,$
 		$$\partial_F \phi_i(\xbf_i;\eps_i) =     r'_i(c_i(\xbf_i^*)+\eps_i)\partial  c_i(\xbf^*_i);$$
 		\item If $c_i(\xbf^*_i)+\eps_i = 0,  y_i^* < +\infty,$
 		$$y_i \partial c_i(\xbf^*_i)   \subset  \partial_F \phi_i(\xbf_i;\eps_i), \ \forall y_i\in[-  y_i^*,   y_i^*];$$
 		\item If $c_i(\xbf^*_i)+\eps_i = 0, y_i^* = +\infty,$
 		$$y_i \partial c_i(\xbf^*_i)  \subset  \partial_F \phi_i(\xbf_i;\eps_i),\ \forall y_i\in\mathbb{R}.$$
 	\end{enumerate}
 \end{lemma}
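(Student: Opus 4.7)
My plan is to prove the lemma in two stages: first establish the Fr\'echet subdifferential characterization of $r_i$ on $\mathbb{R}$, then lift this to the composite function $\phi_i(\xbf_i;\eps_i)=r_i(c_i(\xbf_i)+\eps_i)$ via the chain-rule inclusion in Proposition~\ref{prop.sub}(iv).

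For the first stage, the case $c_i>0$ is immediate: by Assumption~\ref{basic assumption}(iii), $r_i$ is smooth on $\mathbb{R}\setminus\{0\}$, so Proposition~\ref{prop.sub}(i) gives $\partial_F r_i(c_i)=\{r_i'(c_i)\}$. The substantive content is computing $\partial_F r_i(0)$. Here I would go directly to the definition: $y^*\in\partial_F r_i(0)$ iff
\[ \liminf_{u\to 0}\frac{r_i(u)-y^*u}{|u|}\ge 0, \]
and this liminf splits into two one-sided liminfs as $u\to 0^{\pm}$. Concavity of $r_i$ on $\mathbb{R}_+$ with $r_i(0)=0$ makes $r_i(u)/u$ nonincreasing on $\mathbb{R}_{++}$ with limit $y_i^*$ as $u\to 0^+$; combined with the evenness $r_i(-u)=r_i(u)$, the two one-sided limits evaluate to $y_i^*-y^*$ and $y_i^*+y^*$, respectively. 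Hence $y^*\in\partial_F r_i(0)$ iff $|y^*|\le y_i^*$, which yields $[-y_i^*,y_i^*]$ when $y_i^*<+\infty$ and all of $\mathbb{R}$ when $y_i^*=+\infty$ (both conditions being vacuously satisfied in the latter case).

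For the second stage, I would apply Proposition~\ref{prop.sub}(iv) to the outer function $r_i$ and the convex inner function $c_i(\cdot)+\eps_i$ (shifting by a constant preserves convexity, and the convex subdifferential coincides with $\partial c_i(\xbf_i)$). In Case~1, when $c_i(\xbf^*)+\eps_i>0$, the outer $r_i$ is $C^1$ near that point and $\partial_F r_i(c_i(\xbf^*)+\eps_i)=\{r_i'(c_i(\xbf^*)+\eps_i)\}$ is a singleton, so the chain-rule inclusion tightens to an equality directly from the Fr\'echet definition applied to $\phi_i$ at $\xbf^*$. In Cases~2 and~3, part~1 supplies the admissible outer subgradients $y_i\in[-y_i^*,y_i^*]$ or $y_i\in\mathbb{R}$, and Proposition~\ref{prop.sub}(iv) yields $y_i\,\partial c_i(\xbf^*)\subset\partial_F\phi_i(\xbf_i;\eps_i)$ for each such $y_i$. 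The main obstacle I anticipate is the $y_i^*=+\infty$ regime, where showing the one-sided liminfs diverge to $+\infty$ uniformly in $y^*$ requires careful use of the monotonicity of $r_i(u)/u$ together with evenness to convert the left-sided quotient into a positive-argument statement with the correct sign; a secondary subtlety is justifying equality rather than mere inclusion in Case~1, which relies on smoothness of $r_i$ in a full neighborhood of $c_i(\xbf^*)+\eps_i$ and not just pointwise differentiability.
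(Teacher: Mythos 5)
Your proof is correct and follows essentially the same route as the paper's: first determine $\partial_F r_i(0)$ from the one-sided limit $y_i^*=\lim_{c_i\to 0^+}r_i'(c_i)$, then transfer the result to $\phi_i$ via the chain-rule inclusion of Proposition~\ref{prop.sub}(iv). The only difference is that where the paper concludes $\partial_F r_i(0)=[-y_i^*,y_i^*]$ (resp.\ $\mathbb{R}$) by citing Proposition~1.17 of Kruger's survey, you verify it directly from the definition using concavity, $r_i(0)=0$, and evenness --- a harmless, slightly more self-contained variant of the same step.
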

 \begin{proof} The statement about the case that $c_i(\xbf^*_i) > 0$ is obviously true.  We only need consider the case that   $c_i(\xbf^*_i) = 0$.  
 	Notice that  
 	\[ \liminf_{c_i \to 0^+} \frac{r_i(c_i) - r_i(0)}{c_i} = \liminf_{0 < \tilde c_i < c_i \atop c_i \to 0^+}  r_i'(\tilde c_i)
 	= r'_i(0+)=   y_i^* \ge 0\]
 	by Assumption~\ref{basic assumption}$(iii)$. 
 	It can be easily verified by~\cite[Proposition  1.17] {kruger2003frechet} that 
 	\[
 	\partial_F r_i(0)  =  
 	\begin{cases}  [-  y_i^*,   y_i^*]  & \text{if  }    y_i^* < +\infty,\\
 	\mathbb{R} & \text{if }    y_i^* = +\infty.
 	\end{cases}
 	\]
 	It then follows from Proposition~\ref{prop.sub}$(iv)$  that 
 	\[\left\{
 	\begin{aligned} y_i \partial c_i(\xbf^*_i) & \subset  \partial_F \phi_i(\xbf_i;\eps_i), \forall y_i\in[-  y_i^*,   y_i^*],    &\text{if  }    y_i^* < +\infty,\\
 	y_i \partial c_i(\xbf^*_i)  & \subset \partial_F \phi_i(\xbf_i;\eps_i), \forall y_i\in\mathbb{R},                    &\text{if }    y_i^* = +\infty.
 	\end{aligned}\right.
 	\]
 	
 \end{proof}
 
 Note that we only require $\epsbf\in\mathbb{R}_+$.  If $\epsbf = \mathbf{0}$,  all the results we have derived for $J(\dotv; \epsbf)$ in this subsection also hold for $J_0$.

 \subsection{Global Convergence of The AIR Algorithm}
 
 In this subsection, we analyze the global convergence of AIR under Assumption~\ref{basic assumption}.  First of all, we need to show that the subproblem always has a solution.    For $\hat\epsbf\in\mathbb{R}_{++}$, the subproblem is obviously well-defined on $X$ since the weights $w_i^k = r'_i(\xbf^k_i+\eps_i^k) < +\infty$. To guarantee the proposed AIR is well defined, we must show the existence of the subproblem solution.  
 We have  the following lemma about the solvability of the subproblems. 
 
 \begin{lemma} For $\epsbf^k \in\mathbb{R}_{++}$,  $\arg\min_{\xbf} G_{(\xbf^k,\epsbf^k)}(\xbf)$ is    nonempty, so that $\xbf^{k+1}$ is well-defined. 
 \end{lemma}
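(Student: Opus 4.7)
The plan is to show that $G_{(\xbf^k,\epsbf^k)}(\dotv)$ is a proper, lower semicontinuous, convex, and coercive function on $\mathbb{R}^n$, and then invoke the standard existence theorem for such functions.

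First I would verify the structural properties of the subproblem objective. The function $f$ is convex and smooth, $\delta(\dotv|X)$ is convex and lsc since $X$ is closed and convex, and each $c_i$ is convex by Assumption~\ref{basic assumption}(iv). The key step in this part is to check that the weights $w_i^k = r_i'(c_i(\xbf_i^k)+\eps_i^k)$ are well-defined, finite, and strictly positive. Since $\eps_i^k > 0$ and $c_i(\xbf_i^k) \geq 0$, we have $c_i(\xbf_i^k)+\eps_i^k > 0$, which lies in the region where $r_i$ is smooth by Assumption~\ref{basic assumption}(iii); moreover, the strict monotonicity and concavity of $r_i$ on $\mathbb{R}_+$ yield $0 < w_i^k < +\infty$. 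Consequently, $G_{(\xbf^k,\epsbf^k)}$ is a nonnegative linear combination of convex lsc functions plus an indicator, hence proper (since $\xbf^k \in X$ gives a finite value), convex, and lsc on $\mathbb{R}^n$.

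Next I would establish coercivity. Since each $c_i$ is coercive on $\mathbb{R}^{n_i}$ and $w_i^k > 0$, the sum $\sum_{i\in\Gcal} w_i^k c_i(\xbf_i)$ is coercive in $\xbf$: indeed, if $\|\xbf\| \to \infty$, then $\max_i \|\xbf_i\| \to \infty$, so at least one $c_i(\xbf_i) \to +\infty$ while all other terms remain nonnegative. Combined with $f(\xbf) \ge \underline{f}$ on $X$, this yields
\[
G_{(\xbf^k,\epsbf^k)}(\xbf) \ge \underline{f} + \sum_{i\in\Gcal} w_i^k c_i(\xbf_i) \longrightarrow +\infty
\]
as $\|\xbf\|\to\infty$ with $\xbf\in X$, so all sublevel sets of $G_{(\xbf^k,\epsbf^k)}$ are bounded.

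Finally, I would conclude via the classical Weierstrass argument: a proper lsc function with bounded sublevel sets attains its infimum on $\mathbb{R}^n$, so the argmin set is nonempty. This makes $\xbf^{k+1}$ well-defined in Step 2 of Algorithm~\ref{alg.1}. The only place that requires care is the verification that $w_i^k$ is both finite and strictly positive; this is precisely where the assumption $\epsbf^k \in \mathbb{R}_{++}^m$ is used, guaranteeing $c_i(\xbf_i^k)+\eps_i^k$ stays bounded away from the possibly singular point $0$ of $r_i'$.
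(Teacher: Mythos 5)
Your proposal is correct and follows essentially the same route as the paper: the paper also establishes nonemptiness of the argmin by noting that a sublevel set of $G_{(\xbf^k,\epsbf^k)}$ anchored at any $\tilde\xbf\in X$ is nonempty and bounded (via the coercivity of $w_i^k c_i$ and the lower bound $\underline f$ on $f$ over $X$) and then invoking a Weierstrass-type existence theorem. Your additional care in checking that $\epsbf^k\in\mathbb{R}_{++}^m$ forces $0<w_i^k<+\infty$ is exactly the point the paper addresses in the sentence preceding the lemma, so nothing is missing.
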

 
 \begin{proof}  Pick $\tilde\xbf\in X$ and let $\alpha:= G_{(\xbf^k,\epsbf^k)}(\tilde\xbf)$.   The level set 
 	\[  \{ \xbf \in X | G_{(\xbf^k,\epsbf^k)} (\xbf) \le G_{(\xbf^k,\epsbf^k)}(\tilde\xbf) \}   \] 
 	must be nonempty since it contains $\tilde\xbf$,  and bounded  due to the coercivity of $w_i^k c_i$, $i\in\Gcal$ and the lower boundedness of $f$ on $X$. 
 	This completes the proof  by~\cite[Theorem 4.3.1]{ortega1970iterative}.
 \end{proof}

 We have the following key facts about solutions to \eqref{prob.G}, which implies that the new iterate $\xbf^{k+1}$ causes a decrease in the model   $J(\xbf, \epsbf^k)$.
 
 \begin{lemma}\label{lem.J decrease.case1}  
 	Let $\tilde\xbf \in X$, $\hat\epsbf, \tilde\epsbf\in\mathbb{R}_{++}^m$ with $\hat\epsbf\le \tilde\epsbf$ and $\tilde w_i = w_i(\tilde\xbf_i, \tilde\eps_i)$ for $i\in\Gcal$.
 	Suppose that $\hat\xbf \in \arg\min_{\xbf\in X} G_{(\tilde \xbf, \tilde \epsbf)}(\xbf)$.
 	Then, for any $k$, it holds true that 
 	\[ J(\hat\xbf,\hat\epsbf) - J(\tilde\xbf, \tilde\epsbf) \le G_{(\tilde\xbf, \tilde\epsbf)}(\hat\xbf ) - G_{(\tilde\xbf , \tilde\epsbf)}(\tilde\xbf) \le 0.\]
 \end{lemma}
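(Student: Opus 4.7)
The plan is to unfold both $J$ and $G_{(\tilde\xbf,\tilde\epsbf)}$ on the two sides, split the bounding into two inequalities as written, and treat them separately. The right inequality $G_{(\tilde\xbf,\tilde\epsbf)}(\hat\xbf)-G_{(\tilde\xbf,\tilde\epsbf)}(\tilde\xbf)\le 0$ is immediate from the assumption that $\hat\xbf$ is a minimizer of $G_{(\tilde\xbf,\tilde\epsbf)}$ over $X$, combined with $\tilde\xbf\in X$.

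For the left inequality, I would rewrite it in the equivalent separated form
\[
J(\hat\xbf,\hat\epsbf) - G_{(\tilde\xbf,\tilde\epsbf)}(\hat\xbf) \;\le\; J(\tilde\xbf,\tilde\epsbf) - G_{(\tilde\xbf,\tilde\epsbf)}(\tilde\xbf),
\]
since the $f$ and $\delta(\cdot\,|X)$ terms cancel inside each side and the argument reduces to the sparsity-inducing piece. Concretely, this becomes
\[
\sum_{i\in\Gcal}\bigl[r_i(c_i(\hat\xbf_i)+\hat\eps_i) - \tilde w_i c_i(\hat\xbf_i)\bigr]
\;\le\;
\sum_{i\in\Gcal}\bigl[r_i(c_i(\tilde\xbf_i)+\tilde\eps_i) - \tilde w_i c_i(\tilde\xbf_i)\bigr].
\]
I will establish this inequality termwise. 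By Assumption~\ref{basic assumption}, each $r_i$ is concave and differentiable on $\mathbb{R}_{++}$, with $\tilde w_i=r_i'(c_i(\tilde\xbf_i)+\tilde\eps_i)$, so the concave inequality at the linearization point $c_i(\tilde\xbf_i)+\tilde\eps_i$ evaluated at $c_i(\hat\xbf_i)+\hat\eps_i$ gives
\[
r_i(c_i(\hat\xbf_i)+\hat\eps_i) \le r_i(c_i(\tilde\xbf_i)+\tilde\eps_i) + \tilde w_i\bigl[c_i(\hat\xbf_i) - c_i(\tilde\xbf_i)\bigr] + \tilde w_i(\hat\eps_i-\tilde\eps_i).
\]
Rearranging isolates exactly the quantity we want, plus the leftover term $\tilde w_i(\hat\eps_i-\tilde\eps_i)$. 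The key monotonicity observation finishes it: since $r_i$ is strictly increasing on $\mathbb{R}_+$ and $c_i(\tilde\xbf_i)+\tilde\eps_i\ge 0$, we have $\tilde w_i\ge 0$; combined with the hypothesis $\hat\epsbf\le\tilde\epsbf$, the leftover term $\tilde w_i(\hat\eps_i-\tilde\eps_i)$ is nonpositive and can be dropped from the upper bound.

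The only subtlety is checking that the tangent inequality is valid at the relevant points, i.e.\ that $\tilde w_i$ is a well-defined (finite) gradient of $r_i$ at $c_i(\tilde\xbf_i)+\tilde\eps_i$; this is guaranteed because $\tilde\eps_i>0$ ($\tilde\epsbf\in\mathbb{R}_{++}^m$) pushes the argument strictly away from the origin where $r_i$ could be nonsmooth. Once the termwise concavity bound is summed over $i\in\Gcal$ and combined with the optimality inequality for $\hat\xbf$ on the other side, both chains of the claimed inequality are obtained. No serious obstacle is anticipated; the main care is bookkeeping of the $\tilde w_i(\hat\eps_i-\tilde\eps_i)$ cross-term and invoking $\tilde\eps_i>0$ to ensure differentiability of $r_i$ at the linearization point.
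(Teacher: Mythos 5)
Your proposal is correct and follows essentially the same route as the paper: the right inequality from the minimizer property of $\hat\xbf$, and the left inequality from the concave tangent bound $r_i(z)\le r_i(z_0)+r_i'(z_0)(z-z_0)$ at $z_0=c_i(\tilde\xbf_i)+\tilde\eps_i$. The only cosmetic difference is that the paper first uses monotonicity of $r_i$ to pass from $J(\hat\xbf;\hat\epsbf)$ to $J(\hat\xbf;\tilde\epsbf)$ and then linearizes with equal relaxation parameters, whereas you linearize directly at the mismatched arguments and drop the cross-term $\tilde w_i(\hat\eps_i-\tilde\eps_i)\le 0$ using $\tilde w_i\ge 0$ — the same two facts in the opposite order.
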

 
 \begin{proof}
 	
 	First of all, 
 	$\hat \xbf  \in\arg\min_\xbf   G_{(\tilde\xbf, \tilde\epsbf)}(\xbf)$, so that  
 	$G_{(\tilde\xbf, \tilde\epsbf)}(\hat\xbf ) - G_{(\tilde\xbf , \tilde\epsbf)}(\tilde\xbf) \le 0$. 
 	Hence
 	$$\begin{aligned} J(\hat\xbf; \hat \epsbf ) \le & J(\hat \xbf; \tilde \epsbf) =  f(\hat \xbf ) +  \sum_{i\in\Gcal}
 	r_i( c_i(\hat \xbf_i ) + \tilde\eps_i )\\
 	\le & f(\tilde\xbf)+f(\hat\xbf )-f(\tilde\xbf) +  \sum_{i\in\Gcal}
 	r_i( c_i(\tilde\xbf) +\tilde\eps_i)+  \sum_{i\in\Gcal}
 	\tilde w_i ( c_i(\hat\xbf )-c_i(\tilde\xbf)  )\\
 	= & J(\tilde\xbf; \tilde\epsbf) + [G_{(\tilde\xbf, \tilde\epsbf)}(\hat\xbf )-G_{(\tilde\xbf,\tilde\epsbf)}(\tilde\xbf)],\\
 	\end{aligned}
 	$$
 	where the second inequality follows from \eqref{eq.lem.2.1}.
 \end{proof}
 
  Lemma~\ref{lem.J decrease.case1} indicates $J(\xbf; \epsbf)$ is monotonically decreasing for any $\xbf^0\in X, \epsbf^0\in \mathbb{R}^m$. 
  Define the model reduction 
  \[ \Delta G_{(\xbf^k,\epsbf^k)}(\xbf^{k+1}) = G_{(\xbf^k,\epsbf^k)}(\xbf^k) - G_{(\xbf^k,\epsbf^k)}(\xbf^{k+1}).\]
  The next lemma indicates this model reduction converges to zero, which naturally follows from Lemma~\ref{lem.J decrease.case1}. 
  
  \begin{lemma}\label{lem.delta to 0}
  	Suppose  
  	$\xbf^0\in  X$, $\epsbf^0 \in \Rb^m_{++}$, and $\{\xbf^k\}$ are generated by AIR.  The following statements hold true
  	\begin{enumerate}
  		\item[(i)] The sequence $\{\xbf^k\} \subset L(J(\xbf^0; \epsbf^0); J_0)$.
  		\item[(ii)]    $\lim\limits_{k\to\infty}\Delta G_{(\xbf^k,\epsbf^k)}(\xbf^{k+1})\to 0$.
  	\end{enumerate}
  \end{lemma}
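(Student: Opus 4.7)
The plan is to prove both parts by combining the monotone-decrease property already furnished by Lemma~\ref{lem.J decrease.case1} with the sandwich bound $J_0(\xbf)\le J(\xbf;\epsbf)$ coming from Theorem~\ref{thm.j approx j0}, and to extract the summability needed for part (ii) from a standard telescoping argument.

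For part (i), I first apply Lemma~\ref{lem.J decrease.case1} at each iteration with $\tilde\xbf=\xbf^k$, $\hat\xbf=\xbf^{k+1}$, $\tilde\epsbf=\epsbf^k$, $\hat\epsbf=\epsbf^{k+1}$. The hypothesis $\hat\epsbf\le\tilde\epsbf$ is precisely the reweighting rule $\epsbf^{k+1}\in(\bm{0},\epsbf^k]$ enforced in Step 3 of Algorithm~\ref{alg.1}, and $\xbf^{k+1}$ is a minimizer of $G_{(\xbf^k,\epsbf^k)}$ by Step 2. Hence
\[
J(\xbf^{k+1};\epsbf^{k+1})\le J(\xbf^k;\epsbf^k)\le\cdots\le J(\xbf^0;\epsbf^0).
\]
Theorem~\ref{thm.j approx j0} then gives $J_0(\xbf^k)\le J(\xbf^k;\epsbf^k)\le J(\xbf^0;\epsbf^0)$, so $\xbf^k\in L(J(\xbf^0;\epsbf^0);J_0)$.

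For part (ii), Lemma~\ref{lem.J decrease.case1} yields the crucial inequality
\[
0\le \Delta G_{(\xbf^k,\epsbf^k)}(\xbf^{k+1})\le J(\xbf^k;\epsbf^k)-J(\xbf^{k+1};\epsbf^{k+1}),
\]
where the lower bound uses that $\xbf^{k+1}$ minimizes $G_{(\xbf^k,\epsbf^k)}$ and the upper bound is the first assertion of Lemma~\ref{lem.J decrease.case1} rearranged. Summing from $k=0$ to $K-1$ telescopes the right-hand side to
\[
\sum_{k=0}^{K-1}\Delta G_{(\xbf^k,\epsbf^k)}(\xbf^{k+1})\le J(\xbf^0;\epsbf^0)-J(\xbf^K;\epsbf^K).
\]
I then bound $J(\xbf^K;\epsbf^K)$ below by $\underline f$ using Assumption~\ref{basic assumption}: $f\ge\underline f$ on $X$, each $c_i(\xbf_i)+\eps_i\ge 0$, and $r_i$ is strictly increasing on $\mathbb{R}_+$ with $r_i(0)=0$, so $r_i(c_i(\xbf_i)+\eps_i)\ge 0$. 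The partial sums are therefore uniformly bounded by the constant $J(\xbf^0;\epsbf^0)-\underline f$, which makes the nonnegative series convergent and forces $\Delta G_{(\xbf^k,\epsbf^k)}(\xbf^{k+1})\to 0$.

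The argument is mostly bookkeeping; the only point that requires a little care is the uniform lower bound on $J(\xbf^k;\epsbf^k)$, since the $\epsbf^k$ vary and the sparsity-inducing terms are nonconvex. Noting explicitly that Assumption~\ref{basic assumption} forces $r_i\ge 0$ on $\mathbb{R}_+$ (via monotonicity and $r_i(0)=0$) and that $c_i\ge 0$ on $X$ gives the bound without extra work, after which the telescoping completes the proof.
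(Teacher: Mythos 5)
Your proof is correct and follows essentially the same route as the paper's: part (i) from the monotone decrease of $J(\xbf^k;\epsbf^k)$ given by Lemma~\ref{lem.J decrease.case1} together with $J_0\le J(\cdot;\epsbf)$, and part (ii) by telescoping the inequality $\Delta G_{(\xbf^k,\epsbf^k)}(\xbf^{k+1})\le J(\xbf^k;\epsbf^k)-J(\xbf^{k+1};\epsbf^{k+1})$ against the lower bound $\inf_k J(\xbf^k;\epsbf^k)>-\infty$. Your explicit justification that $r_i\ge 0$ and $f\ge\underline f$ yield this lower bound is a welcome elaboration of a step the paper only asserts.
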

   \begin{proof}
   	Part $(i)$ follows naturally  from   the fact that 
   	\[ J_0(\xbf^k) \le  J(\xbf^k,\epsbf^k) \le J(\xbf^0, \epsbf^0),\]
   	for all $k \in \Nb$ by Lemma~\ref{lem.J decrease.case1}. 
   	
   	For part $(ii)$, by Assumption~\ref{basic assumption},  $\tilde{J}:=\inf\limits_kJ(\xbf^{k}; \epsbf^k) > -\infty$. It follows from Lemma~\ref{lem.J decrease.case1}, that 
   	$$
   	J(\xbf^{k+1},\epsbf^{k+1}) \le J(\xbf^k,\epsbf^k) - \Delta G_{(\xbf^k,\epsbf^k)}(\xbf^{k+1} ).
   	$$
   	Summing up both sides of the above inequality from $0$ to $t$, we have
   	$$
   	\begin{aligned}
   	0&\le \sum\limits_{k=1}^{t} \Delta G_{(\xbf^k,\epsbf^k)}(\xbf^{k+1}) \\
   	&\le J(\xbf^{0},\epsbf^{0}) - J(\xbf^{t+1},\epsbf^{t+1})\le J(\xbf^{0},\epsbf^{0}) - \tilde{J}.
   	\end{aligned}
   	$$
   	Letting $t\to\infty$,  we know part $(ii)$ holds true.
   \end{proof}
   
   \subsubsection{Convergence Analysis for Bounded Weights}
   We first analyze the convergence when $\epsbf^k\to\epsbf^*\in\mathbb{R}_{++}$  or $\lim\limits_{c_i\to 0^+}r'_i(c_i) < + \infty$, $i\in\Gcal$. 
   In this case, $w_i^k \to w_i^*<+\infty$ if $\xbf_i^k\to \mathbf{0}$.
   The  ``limit subproblem'' takes form  
   \beq\label{general.sub.alter.0}
   \min_{\xbf}  \quad  \widetilde G_{(\tilde\xbf,\tilde\epsbf)}(\xbf) := f(\xbf) + \sum_{i\in\Gcal} \tilde w_i c_i(\xbf_i) + \delta(\xbf | X).  
   \eeq
   The existence of the solution to \eqref{general.sub.alter.0} is shown in the next lemma. 
   
   \begin{lemma}\label{lem.limit problem.0}
   	For $\tilde\epsbf\in\mathbb{R}_{++}$,  the optimal solution set of \eqref{general.sub.alter.0} is  nonempty.  Furthermore, if $\tilde\xbf$ is an optimal solution of \eqref{general.sub.alter.0}, then $\tilde\xbf$ also satisfies the first-order optimality condition of {\eqref{prob.J}}. 
   \end{lemma}
   \begin{proof}
   	Notice that 
   	$\tilde\xbf$ is feasible for \eqref{general.sub.alter.0} by the definition of $\widetilde G$. 
   	The level set 
   	\[  \{ \xbf \in X \mid  \widetilde G_{(\xbf^k,\epsbf^k)} (\xbf) \le \widetilde G_{(\xbf^k,\epsbf^k)}(\tilde\xbf)\}  \] 
   	must be nonempty since it contains $\tilde\xbf$ and  bounded  due to the coercivity of $\tilde w_i c_i$, $i\in\Gcal$ and the lower boundedness of $f$ on $X$. 
   	This completes the proof  by~\cite[Theorem 4.3.1]{ortega1970iterative}.   
   	
   	Therefore, any optimal solution $\xbf$ must satisfies  
   	\[ \mathbf{0}=  \nabla f(\xbf)_i +  \zbf_i + \nubf_i, i\in\Gcal \]
   	where $\nubf\in N(\xbf | X)$, $\zbf_i=\tilde w_i \xibf_i $ with 
   	\[ \tilde w_i \in\partial_F r_i(c_i(\tilde\xbf_i)+\tilde\epsbf_i),\ \xibf_i\in\partial c_i(\xbf_i),\ i\in\Gcal.\] 
   	The KKT conditions thus can be rewritten as following by Lemma~\ref{sub diff r}
   	$$\mathbf{0}=  \nabla f(\xbf)_i + \tilde w_i\xibf_i  + \nubf_i,$$  $$  \tilde w_i\in\partial_F r_i(c_i(\tilde \xbf_i) + \tilde \eps_i),\ \xibf_i\in\partial c_i(\xbf_i),$$
   	where $ \ i\in\Gcal$. 
   	If $\tilde\xbf$ is an optimal solution, then we have 
   	\[ \mathbf{0} \in \nabla f(\tilde\xbf) + \partial_F \Phi(\tilde\xbf;\tilde\epsbf) + N(\tilde\xbf | X),\]
   	implying $\tilde\xbf$ is optimal for $J(\dotv; \tilde\epsbf)$.
   \end{proof}
   
   Now we are ready to prove our main result in this section.
   
   \begin{theorem}\label{main.1.0}
   	Suppose     $\{\xbf^k\}_{k=0}^\infty$ is generated by AIR with initial point
   	$\xbf^0 \in X$ and  relaxation vector $\epsbf^0 \in \bR^m_{++}$ with $\epsbf^k\to\epsbf^*$.  Assume either  
   	\[ \eps_i^*>0 \text{ or }   r'(0+) < +\infty, \  i\in\Gcal\]
   	is true. Then  if $\{ \xbf^k\}$ has any cluster point,   it satisfies the optimality condition \eqref{optimality} for $J(\xbf; \epsbf^*)$. 
   \end{theorem}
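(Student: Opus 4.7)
The plan is to show that any cluster point $\xbf^*$ is in fact a minimizer of the ``limit subproblem''
\[
\widetilde{G}_{(\xbf^*,\epsbf^*)}(\xbf) = f(\xbf) + \sum_{i\in\Gcal} w_i^*\,c_i(\xbf_i) + \delta(\xbf|X),
\]
where $w_i^*$ is the limit of the weights along the subsequence, and then invoke (the argument inside) Lemma~\ref{lem.limit problem.0} to pass from minimality of $\widetilde G_{(\xbf^*,\epsbf^*)}$ to the Fr\'echet-stationarity condition \eqref{optimality} for $J(\,\cdot\,;\epsbf^*)$.

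First I would extract a subsequence $\{\xbf^{k_j}\}$ with $\xbf^{k_j}\to\xbf^*$ and verify the limit weights $w_i^* := \lim_j w_i(\xbf_i^{k_j},\eps_i^{k_j})$ are well-defined and finite. Two cases: if $c_i(\xbf_i^*)+\eps_i^*>0$, then by continuity of $c_i$ and smoothness of $r_i$ on $\mathbb{R}\setminus\{0\}$, $w_i^{k_j}\to r_i'(c_i(\xbf_i^*)+\eps_i^*)$. If $c_i(\xbf_i^*)+\eps_i^*=0$, then necessarily $\eps_i^*=0$, which by hypothesis forces $r_i'(0+)<+\infty$; combined with the monotonicity of $r_i'$ on $\mathbb{R}_+$ (inherited from concavity), this yields $w_i^{k_j}\to r_i'(0+)<+\infty$. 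In particular $w_i^*\in\partial_F r_i(c_i(\xbf_i^*)+\eps_i^*)$ in both cases, by Lemma~\ref{sub diff r}.

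Next I would use Lemma~\ref{lem.delta to 0}(ii) to show $\xbf^*\in\arg\min_{\xbf}\widetilde{G}_{(\xbf^*,\epsbf^*)}(\xbf)$. Fix any $\bar\xbf\in X$. Since $\xbf^{k_j+1}$ is a minimizer of $G_{(\xbf^{k_j},\epsbf^{k_j})}$,
\[
G_{(\xbf^{k_j},\epsbf^{k_j})}(\xbf^{k_j}) - G_{(\xbf^{k_j},\epsbf^{k_j})}(\bar\xbf) \le \Delta G_{(\xbf^{k_j},\epsbf^{k_j})}(\xbf^{k_j+1}) \to 0.
\]
Continuity of $f$ and each $c_i$, together with $w_i^{k_j}\to w_i^*$, gives $G_{(\xbf^{k_j},\epsbf^{k_j})}(\xbf^{k_j})\to \widetilde G_{(\xbf^*,\epsbf^*)}(\xbf^*)$ and $G_{(\xbf^{k_j},\epsbf^{k_j})}(\bar\xbf)\to \widetilde G_{(\xbf^*,\epsbf^*)}(\bar\xbf)$. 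Passing to the limit yields $\widetilde G_{(\xbf^*,\epsbf^*)}(\xbf^*)\le \widetilde G_{(\xbf^*,\epsbf^*)}(\bar\xbf)$, and since $\bar\xbf\in X$ was arbitrary, $\xbf^*$ is a global minimizer of the convex problem $\widetilde G_{(\xbf^*,\epsbf^*)}$.

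Finally I would invoke the KKT conditions for this convex subproblem to obtain $\xibf_i\in\partial c_i(\xbf_i^*)$ and $\nubf\in N(\xbf^*|X)$ with $\bm{0} = \nabla f(\xbf^*) + \sum_{i\in\Gcal} w_i^*\xibf_i + \nubf$. Because $w_i^*\in\partial_F r_i(c_i(\xbf_i^*)+\eps_i^*)$ by step one, Lemma~\ref{lem.1aaa} (chain rule for Fr\'echet subdifferentials on the convex-composite pieces) gives $\sum_i w_i^*\xibf_i\in\partial_F\Phi(\xbf^*;\epsbf^*)$, and hence $\bm{0}\in\nabla f(\xbf^*)+\partial_F\Phi(\xbf^*;\epsbf^*)+N(\xbf^*|X)=\partial_F J(\xbf^*;\epsbf^*)$, which is precisely \eqref{optimality}. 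The main technical obstacle is the boundary case $c_i(\xbf_i^*)+\eps_i^*=0$, where $r_i'$ is not directly defined at $0$; the hypothesis $r_i'(0+)<+\infty$ is exactly what is needed both for continuity of the weights and for $r_i'(0+)$ to land in the (now bounded) Fr\'echet subdifferential $\partial_F r_i(0)=[-r_i'(0+),r_i'(0+)]$.
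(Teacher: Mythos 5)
Your proposal is correct and follows essentially the same route as the paper: use Lemma~\ref{lem.delta to 0}(ii) to show the cluster point minimizes the limit subproblem $\widetilde G_{(\xbf^*,\epsbf^*)}$, then pass to the stationarity condition via the KKT/chain-rule argument of Lemma~\ref{lem.limit problem.0}. The only differences are cosmetic and favorable to you: you pass to the limit directly rather than running the paper's $\varepsilon/12$ contradiction bookkeeping, and you explicitly justify the convergence and finiteness of the weights $w_i^k\to w_i^*$ (the case split on $c_i(\xbf_i^*)+\eps_i^*$, which is exactly where the hypothesis $\eps_i^*>0$ or $r_i'(0+)<+\infty$ enters), a step the paper asserts without proof.
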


   \begin{proof} Let $\xbf^*$ be a cluster point of $\{\xbf^k\}$.  From Lemma~\ref{lem.limit problem.0}, it suffices to show that 
   	$\xbf^*\in\arg\min_{\xbf}\widetilde G_{(\xbf^*,\epsbf^*)}(\xbf)$. 
   	We prove this by contradiction.  Assume that there exists a point $\bar\xbf$ such that 
   	$\varepsilon := \widetilde G_{(\xbf^*, \epsbf^*)}(\xbf^*) - \widetilde G_{(\xbf^*, \epsbf^*)}(\bar{\xbf}) >0$. 
   	Suppose  $\{\xbf^k\}_\Scal\to\xbf^*$,  $\Scal \subset \mathbb{N}$. Based on Lemma~\ref{lem.delta to 0}$(ii)$, there exists $k_1 >0$, such that for all $k> k_1$
   	\begin{equation}\label{g is small 0}
   	\widetilde G_{(\xbf^k, \epsbf^k)}(\xbf^k) - \widetilde G_{(\xbf^k, \epsbf^k)}(\xbf^{k+1})\le \varepsilon/4.
   	\end{equation}  
   	To derive a contradiction, notice that  $\xbf_i^k\overset{\Scal}{\to} \xbf_i^*$ and   $w_i^k \overset{\Scal}{\to} w_i^*$. There exists $k_2$ such that for all $k>k_2, k\in\Scal$,
   	\[
   	\begin{aligned}
   	\sum_{i\in \Gcal}(w_i^*-w_i^k)c_i(\bar\xbf_i) & \ge - \varepsilon/12,  \\
   	\sum_{i\in \Gcal}(w_i^k c_i(\xbf_i^k)-w_i^*c_i( \xbf_i^*) ) & \ge - \varepsilon/12,\\
   	f(\xbf^k) - f(\xbf^*) & \ge - \varepsilon/12.
   	\end{aligned}
   	\]
   	Therefore,  for all $k>k_2, k\in\Scal$,
   	\[
   	\begin{aligned}
   	&\ \widetilde G_{(\xbf^*, \epsbf^*)}(\xbf^*) - \widetilde G_{(\xbf^k, \epsbf^k)}(\bar\xbf)\\
   	= &\  [f(\xbf^*)+ \sum_{i\in \Gcal} w_i^* c_i(\xbf_i^*) ] - [f(\bar\xbf)+ \sum_{i\in \Gcal} [w_i^*-(w_i^*-w_i^k)] c_i(\bar\xbf_i) \\
   	= &\ [\widetilde G_{(\xbf^*, \epsbf^*)}(\xbf^*) - \widetilde G_{(\xbf^*, \epsbf^*)}(\bar\xbf) ] + \sum_{i\in \Gcal}(w_i^*-w_i^k)c_i(\bar\xbf_i),\\
   	\ge &\  [\widetilde G_{(\xbf^*, \epsbf^*)}(\xbf^*) - \widetilde G_{(\xbf^*, \epsbf^*)}(\bar\xbf) ] - \varepsilon/12\\
   	= &\ \varepsilon - \varepsilon/12 = 11\varepsilon/12,
   	\end{aligned}
   	\]
   	and that
   	\[
   	\begin{aligned}
   	& \ \widetilde G_{(\xbf^k, \epsbf^k)}(\xbf^k) - \widetilde G_{(\xbf^*, \epsbf^*)}(\xbf^*) \\
   	=  &\ [f(\xbf^k)   + \sum_{i\in \Gcal} w_i^k c_i(\xbf_i^k)]- [f(\xbf^*)+ \sum_{i\in \Gcal}w_i^* c_i(\xbf_i^*) ]\\
   	\ge  & \ -\varepsilon/6
   	\end{aligned}
   	\]
   	Hence, for all $k>\max(k_1,k_2), k\in\Scal$,  it holds that 
   	\[
   	\begin{aligned}
   	& \widetilde G_{(\xbf^k, \epsbf^k)}(\xbf^k) - \widetilde G_{(\xbf^k, \epsbf^k)}(\bar\xbf) \\
   	= & \widetilde G_{(\xbf^k, \epsbf^k)}(\xbf^k) - \widetilde G_{(\xbf^*, \epsbf^*)}(\xbf^*) + \widetilde G_{(\xbf^*, \epsbf^*)}(\xbf^*) - \widetilde G_{(\xbf^k, \epsbf^k)}(\bar\xbf)\\
   	= &11\varepsilon/12 -\varepsilon/6 = 3\varepsilon/4,
   	\end{aligned}
   	\]
   	contradicting with \eqref{g is small 0}. Therefore, $\xbf^*\in\arg\min_{\xbf}\widetilde G_{(\xbf^*,\mathbf{0})}(\xbf)$. 
   	By Lemma~\ref{lem.limit problem.0}, $\xbf^*$ satisfies the first-order optimality for \eqref{prob.J}.
   \end{proof}
   
   {\begin{remark}
   	Note that there are several choices of $\phi_i$ satisfy $r'(0+) < +\infty, \  i\in\Gcal$ as shown in Table \ref{tab.1}. For example, $\phi_i$ takes the \eqref{f.exp} form with both $c_i(\bm{x}_i) = \|\bm{x}_i\|_1$ and $c_i(\bm{x}_i) = \|\bm{x}_i\|_2^2$, $\phi_i$ takes the \eqref{f.log} form with $c_i(\bm{x}_i) = \|\bm{x}_i\|_1$ and $\phi_i$ takes the \eqref{f.atan} form with $c_i(\bm{x}_i) = \|\bm{x}_i\|_1$.
	\end{remark}}
   
   \begin{remark}
   	The convexity of $f$ is not necessary if $\xbf^{k+1}$ is found as the global minimizer of \eqref{prob.J}. In this case, the global convergence  we have derived so far can be modified accordingly, and in  the statement of Lemma~\ref{lem.limit problem.0},  a global minimizer $\tilde\xbf$ of 
   	\eqref{general.sub.alter.0} implies its optimality of \eqref{prob.J}.  
   \end{remark}
   
   \subsubsection{Convergence Analysis for Reweighed $\ell_1$ {and $\ell_2$} with Vanishing $\epsilon$ }

   We have shown the convergence of AIR with fixed $\epsbf$.  By Theorem~\ref{thm.j approx j0}, we can choose sufficiently small 
   $\epsbf$ and minimize $J(\cdot; \epsbf)$ instead of $J_0$ to obtain an approximate solution. However, as also shown  by Theorem~\ref{thm.j approx j0},  $J(\cdot; \epsbf)$   converges to $J_0$ only pointwisely. It then may be difficult to assert that the minimizer of $J(\cdot; \epsbf)$ is sufficiently 
   close to the minimizer of $J_0$ for given $\epsbf$.  Therefore, we consider to minimize  a sequence of $J(\cdot; \epsbf)$ with $\epsbf$ driven to $\mathbf{0}$.

   As the algorithm proceeds, of particular interest is the properties  of  the ``limit subproblem'' as the (sub)sequence of iterates converges. Let $\Lcal: = \{ i \mid r_i'(0+) = +\infty\}$. 
   Notice that it may happen
   $w_i^k\to\infty$ if $\xbf^k_i\to 0$ and $\eps_i^k \to 0$, so that $G$ may be not well-defined. Therefore we consider an alternative form of the ``limit 
   subproblem'' for $\tilde\epsbf=\mathbf{0}$
   \beq\label{general.sub.alter}
   \begin{aligned}
   	\min_{\xbf} & \quad  \widetilde G_{(\tilde\xbf,\mathbf{0})}(\xbf) := f(\xbf) + \sum_{i\in\Ncal(\tilde\xbf,\mathbf{0})} \tilde w_i c_i(\xbf_i) + \delta(\xbf | X),\\  \st&\quad  \xbf_i = \mathbf{0},\  i\in \Acal(\tilde\xbf,\mathbf{0}),
   \end{aligned}
   \eeq
   where $\Acal(\tilde\xbf,\mathbf{0}):=\{i\mid \tilde\xbf_i = \mathbf{0}, \tilde \eps_i= 0 \} \cap \Lcal$ and $\Ncal(\tilde\xbf,\mathbf{0}):=\Gcal\setminus\Acal(\tilde\xbf, \mathbf{0})$.   
   The existence of the solution to \eqref{general.sub.alter} is shown in the next lemma. 
   
   \begin{lemma}\label{lem.limit problem}
   	For $\tilde\epsbf=\mathbf{0}$,  the optimal solution set of \eqref{general.sub.alter} is  nonempty.  Furthermore, if $\tilde\xbf$ is an optimal solution of \eqref{general.sub.alter}, then $\tilde\xbf$ also satisfies the first-order optimality condition of \eqref{prob.J0}. 
   \end{lemma}
   \begin{proof}
   	Notice that 
   	$\tilde\xbf$ is feasible for \eqref{general.sub.alter} by the definition of $\widetilde G$. 
   	The level set 
   	\[  \{ \xbf \in X \mid  \widetilde G_{(\tilde\xbf,\mathbf{0})} (\xbf) \le \widetilde G_{(\tilde\xbf,\mathbf{0})}(\tilde\xbf); \  \xbf_i = \mathbf{0}, i\in\Acal(\tilde\xbf, \mathbf{0})\}  \] 
   	must be nonempty since it contains $\tilde\xbf$ and  bounded  due to the coercivity of $\tilde w_i c_i$, $i\in\Gcal$ and the lower boundedness of $f$ on $X$. 
   	This completes the proof  by~\cite[Theorem 4.3.1]{ortega1970iterative}.

   	Obviously  Slater's condition holds at any feasible point of \eqref{general.sub.alter}.  Therefore, any optimal solution $\xbf$ must satisfies the KKT conditions 
   	\[ \mathbf{0}=  \nabla f(\xbf)_i +  \zbf_i + \nubf_i, i\in\Gcal \]
   	with $\nubf\in N(\xbf | X)$, $\zbf_i=\tilde y_i \xibf_i $.
   	{
   	For the case $c_i(\xbf_i) = \|\xbf_i\|_1$, let $\tilde y_i:= \tilde w_i= r_i'(c_i(\tilde\xbf_i)), \xibf_i\in\partial c_i(\xbf_i), i\in\Ncal(\tilde\xbf,\mathbf{0})$.
   	Now for $i\in \Acal(\tilde\xbf,\mathbf{0})$, let $\tilde y_i = \|\zbf_i\|_\infty$ and $\xibf_i=  \zbf_i/\|\zbf_i\|_\infty$ so that $\xibf_i \in \partial c_i(0) = \partial c_i(\tilde\xbf_i)$. 
   	For the case $c_i(\xbf_i) = \|\xbf_i\|_2^2$, let $\tilde y_i:= \tilde w_i= r_i'(c_i(\tilde\xbf_i)), \xibf_i = 2\tilde \xbf_i, i\in\Ncal(\tilde\xbf,\mathbf{0})$.
   	Now for $i\in \Acal(\tilde\xbf,\mathbf{0})$, let $\tilde y_i = \|\zbf_i\|_\infty$ and $\xibf_i=  \zbf_i/\|\zbf_i\|_\infty$ so that $\xibf_i \in \partial \|\mathbf{0}\|_1$. The KKT conditions can be rewritten as 
   	$$ \mathbf{0}=  \nabla f(\xbf)_i +  \tilde y_i \xibf_i  + \nubf_i,$$
   	$$
   	\text{when}\ i \in \Ncal(\tilde\xbf,\mathbf{0}): \left\{\begin{matrix}
   	y_i\in\partial_F r_i(c_i(\tilde \xbf_i)),\\ 
   	\xibf_i\in\partial c_i(\xbf_i)
   	\end{matrix}\right.
   	$$
   	$$
   	\text{when}\ i \in \Acal(\tilde\xbf,\mathbf{0}): \left\{\begin{matrix}
   	y_i\in\partial_F r_i(\sqrt{c_i(\tilde \xbf_i)}),\\ 
   	\xibf_i\in\partial \sqrt{c_i(\xbf_i)}
   	\end{matrix}\right.
   	$$
   	
   	For both cases, the KKT conditions can be rewritten as 
   	$$ \mathbf{0}=  \nabla f(\xbf)_i +  \tilde y_i \xibf_i  + \nubf_i,$$    $$y_i\in\partial_F r_i(c_i(\tilde \xbf_i)),$$  $$\xibf_i\in\partial c_i(\xbf_i),  i\in\Gcal $$
   	by Lemma~\ref{sub diff r}. 
   	If $\tilde\xbf$ is an optimal solution, then we have 
   	\[ \mathbf{0} \in f(\tilde\xbf) + \partial_F \phi(\tilde\xbf;\mathbf{0}) + N(\tilde\xbf | X),\]
   	implying $\tilde\xbf$ is optimal for $J_0(\dotv)$.
   }
   \end{proof}
   
   Now we are ready to prove our main result in this section.
   
   \begin{theorem}\label{main.1}
   	Suppose   sequence $\{\xbf^k\}_{k=0}^\infty$ is generated by AIR $\ell_1$-algorithm with initial point
   	$\xbf^0 \in X$ and relaxation vector $\epsbf^0 \in \bR^m_{++}$.
   	If  $\{ \xbf^k\}$ has any cluster point $\xbf^*$, then it satisfies the optimality condition for $J_0(\xbf)$.
   \end{theorem}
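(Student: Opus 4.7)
The plan is to mirror the contradiction strategy already used for Theorem~\ref{main.1.0}, but augment it to handle the degenerate case in which some weights $w_i^k$ may diverge. First I would pass to a subsequence so that $\xbf^k \overset{\Scal}{\to} \xbf^*$; by Step~3 of the algorithm $\{\epsbf^k\}$ is monotonically nonincreasing and bounded below by $\bm{0}$, so on a further subsequence $\epsbf^k \to \epsbf^* \in \mathbb{R}_+^m$. With these limits in hand I can form the index partition $\Acal:=\Acal(\xbf^*,\epsbf^*)$ and $\Ncal:=\Ncal(\xbf^*,\epsbf^*)$ that defines the limit subproblem $\widetilde G_{(\xbf^*,\epsbf^*)}$. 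By Lemma~\ref{lem.limit problem}, once I show $\xbf^*$ is a minimizer of $\widetilde G_{(\xbf^*,\epsbf^*)}$, the desired first-order optimality condition follows.

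For the contradiction, assume there exists $\bar\xbf\in X$ with $\bar\xbf_i=\bm{0}$ for $i\in\Acal$ satisfying $\widetilde G_{(\xbf^*,\epsbf^*)}(\bar\xbf) < \widetilde G_{(\xbf^*,\epsbf^*)}(\xbf^*) - \varepsilon$ for some $\varepsilon>0$. The goal is to show that for all sufficiently large $k\in\Scal$,
\[
G_{(\xbf^k,\epsbf^k)}(\xbf^k) - G_{(\xbf^k,\epsbf^k)}(\bar\xbf) > \tfrac{3\varepsilon}{4}.
\]
Since $\xbf^{k+1}$ minimizes $G_{(\xbf^k,\epsbf^k)}$, this lower-bounds $\Delta G_{(\xbf^k,\epsbf^k)}(\xbf^{k+1})$ away from zero, contradicting Lemma~\ref{lem.delta to 0}$(ii)$.

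The main technical work is to pass to the limit in $G_{(\xbf^k,\epsbf^k)}(\xbf^k)$ and $G_{(\xbf^k,\epsbf^k)}(\bar\xbf)$ even though $w_i^k\to+\infty$ can occur for $i\in\Acal$. I would treat the three regimes separately. For $i\in\Ncal$, either $\xbf_i^*\neq\bm{0}$ or $\eps_i^*>0$, so $w_i^k = r_i'(c_i(\xbf_i^k)+\eps_i^k)$ converges to a finite $w_i^*$ by smoothness of $r_i$ away from $0$, and the summand behaves exactly as in Theorem~\ref{main.1.0}. For $i\in\Acal$ evaluated at $\bar\xbf$, the feasibility choice $\bar\xbf_i=\bm{0}$ gives $c_i(\bar\xbf_i)=0$, and hence $w_i^k c_i(\bar\xbf_i)=0$ for every $k$, side-stepping the blow-up. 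The delicate piece is the $\xbf^k$ side on $\Acal$: here I would exploit the concavity of $r_i$ on $\mathbb{R}_+$ via the supergradient inequality
\[
0 \le w_i^k\, c_i(\xbf_i^k) = r_i'(c_i(\xbf_i^k)+\eps_i^k)\,c_i(\xbf_i^k) \le r_i(c_i(\xbf_i^k)+\eps_i^k) - r_i(\eps_i^k),
\]
and since $c_i(\xbf_i^k)\to 0$, $\eps_i^k\to 0$, and $r_i(0)=0$, the upper bound tends to $0$, forcing $w_i^k c_i(\xbf_i^k)\to 0$. Combining the three regimes with continuity of $f$ gives $G_{(\xbf^k,\epsbf^k)}(\xbf^k)\to\widetilde G_{(\xbf^*,\epsbf^*)}(\xbf^*)$ and $G_{(\xbf^k,\epsbf^k)}(\bar\xbf)\to\widetilde G_{(\xbf^*,\epsbf^*)}(\bar\xbf)$ along $\Scal$, completing the contradiction.

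The primary obstacle I anticipate is exactly this control of $w_i^k c_i(\xbf_i^k)$ on the degenerate set $\Acal$: the formal product $(\infty)\cdot(0)$ is indeterminate, and it is the concavity of $r_i$ near $0$ that saves the day. Everything else---the Slater-based KKT extraction through Lemma~\ref{lem.limit problem}, the monotone decrease of $J$ through Lemma~\ref{lem.J decrease.case1}, and the vanishing of model reductions through Lemma~\ref{lem.delta to 0}$(ii)$---is already in place and can be invoked directly, so I do not expect additional conceptual difficulty beyond carefully tracking the split between $\Acal$ and $\Ncal$.
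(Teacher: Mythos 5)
Your proposal follows essentially the same contradiction scheme as the paper: pass to a convergent subsequence, invoke Lemma~\ref{lem.limit problem} to reduce the claim to $\xbf^*\in\arg\min\widetilde G_{(\xbf^*,\epsbf^*)}$, suppose a strictly better feasible $\bar\xbf$ with $\bar\xbf_i=\bm{0}$ on $\Acal(\xbf^*,\epsbf^*)$, and drive the model reduction $\Delta G_{(\xbf^k,\epsbf^k)}(\xbf^{k+1})$ away from zero to contradict Lemma~\ref{lem.delta to 0}$(ii)$. The only place you genuinely depart from the paper is the treatment of the potentially indeterminate products $w_i^k c_i(\xbf_i^k)$ for $i\in\Acal(\xbf^*,\epsbf^*)$: the paper needs only a one-sided estimate and simply discards these terms using $w_i^k c_i(\xbf_i^k)\ge 0$ when lower-bounding $G_{(\xbf^k,\epsbf^k)}(\xbf^k)-\widetilde G_{(\xbf^*,\epsbf^*)}(\xbf^*)$, whereas you prove the stronger fact $w_i^k c_i(\xbf_i^k)\to 0$ via the concavity inequality $r_i'(c_i(\xbf_i^k)+\eps_i^k)\,c_i(\xbf_i^k)\le r_i(c_i(\xbf_i^k)+\eps_i^k)-r_i(\eps_i^k)$. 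Your inequality is correct and yields genuine two-sided convergence $G_{(\xbf^k,\epsbf^k)}(\xbf^k)\to\widetilde G_{(\xbf^*,\epsbf^*)}(\xbf^*)$, at the mild extra cost of using continuity of $r_i$ at $0$ (which the paper already relies on implicitly, e.g.\ in Theorem~\ref{thm.j approx j0}); either route closes the argument, and the remainder of your proof matches the paper's.
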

   
   \begin{proof} Let $\xbf^*$ be a cluster point of $\{\xbf^k\}$ and $\lim\limits_{k\to\infty} \epsbf^k=\mathbf{0}$. There exists $k_0>0$ such that for all $k>k_0$, $\epsbf^k=\mathbf{0}$.
   	From Lemma~\ref{lem.limit problem}, it suffices to show that 
   	$\xbf^*\in\arg\min_{\xbf}\widetilde G_{(\xbf^*,\mathbf{0})}(\xbf)$. 
   	We prove this by contradiction.  Assume that there exists a point $\bar\xbf$ such that $c_i(\bar\xbf_i)= 0$ for all $i \in\mathcal{A}(\xbf^*,\mathbf{0})$ and $G_{(\xbf^*, \mathbf{0})}(\xbf^*) - G_{(\xbf^*, \mathbf{0})}(\bar{\xbf})>\varepsilon>0$. 
   	Suppose  $\{\xbf^k\}_\Scal$,  $\Scal \subset \mathbb{N}$. Based on Lemma~\ref{lem.delta to 0}$(ii)$, there exists $k_1 >0$, such that for all $k> k_1$
   	\begin{equation}\label{g is small}
   	\widetilde  G_{(\xbf^k, \epsbf^k)}(\xbf^k) - \widetilde G_{(\xbf^k, \epsbf^k)}(\xbf^{k+1})\le \varepsilon/4.
   	\end{equation}  
   	Notice that  $\xbf_i^k\overset{\Scal}{\to} \xbf_i^*$ and   $w_i^k \overset{\Scal}{\to} w_i^*$.
   	To derive a contradiction, there exists $k_2>k_0$ such that for all $k>k_2, k\in\Scal$,
   	\[
   	\begin{aligned}
   	\sum_{i\in \Ncal(\xbf^*, \mathbf{0})}(w_i^*-w_i^k)c_i(\bar\xbf_i) & > - \varepsilon/12,  \\
   	\sum_{i\in \Ncal(\xbf^*, \mathbf{0})}(w_i^k c_i(\xbf_i^k)-w_i^*c_i( \xbf_i^*) ) & > - \varepsilon/12,\\
   	f(\xbf^k) - f(\xbf^*) & > - \varepsilon/12.
   	\end{aligned}
   	\]
   	Therefore,  for all $k>k_2, k\in\Scal$,
   	\[
   	\begin{aligned}
   	&\ \widetilde G_{(\xbf^*, \mathbf{0})}(\xbf^*) -  \widetilde G_{(\xbf^k, \mathbf{0})}(\bar\xbf)\\
   	= &\  [f(\xbf^*)+ \sum_{i\in \Ncal(\xbf^*, \mathbf{0})} w_i^* c_i(\xbf_i^*) ] -[f(\bar\xbf)+ \sum_{i\in \Ncal(\xbf^*, \mathbf{0})} [w_i^*-(w_i^*-w_i^k)] c_i(\bar\xbf_i)] \\
   	= &\ [\widetilde G_{(\xbf^*, \mathbf{0})}(\xbf^*) - \widetilde G_{(\xbf^*, \mathbf{0})}(\bar\xbf) ] + \sum_{i\in \Ncal(\xbf^*, \mathbf{0})}(w_i^*-w_i^k)c_i(\bar\xbf_i),\\
   	\ge &\  [\widetilde G_{(\xbf^*, \mathbf{0})}(\xbf^*) - \widetilde G_{(\xbf^*, \mathbf{0})}(\bar\xbf) ] - \varepsilon/12\\
   	\ge &\ \varepsilon - \varepsilon/12 = 11\varepsilon/12,
   	\end{aligned}
   	\]
   	and that
   	\[
   	\begin{aligned}
   	& \ \widetilde G_{(\xbf^k, \mathbf{0})}(\xbf^k) -  \widetilde G_{(\xbf^*, \mathbf{0})}(\xbf^*) \\
   	=  &\ [f(\xbf^k) + \sum_{i\in \Acal(\xbf^*, \mathbf{0})} w_i^k c_i(\xbf_i^k) + \sum_{i\in \Ncal(\xbf^*, \mathbf{0})} w_i^k c_i(\xbf_i^k)]\\
   	&- [f(\xbf^*)+ \sum_{i\in \Ncal(\xbf^*, \mathbf{0})}w_i^* c_i(\xbf_i^*) ]\\
   	\ge &\ [f(\xbf^k)   + \sum_{i\in \Ncal(\xbf^*, \mathbf{0})} w_i^k c_i(\xbf_i^k)]- [f(\xbf^*)+ \sum_{i\in \Ncal(\xbf^*, \mathbf{0})}w_i^* c_i(\xbf_i^*) ]\\ 
   	\ge  & \ -\varepsilon/6
   	\end{aligned}
   	\]
   	Hence, for all $k>\max(k_1,k_2), k\in\Scal$,  it holds that 
   	\[
   	\begin{aligned}
   	& \  \widetilde G_{(\xbf^k, \mathbf{0})}(\xbf^k) - \widetilde G_{(\xbf^k, \mathbf{0})}(\xbf^{k+1}) \\
   	= & \  \widetilde G_{(\xbf^k, \mathbf{0})}(\xbf^k) - \widetilde G_{(\xbf^*, \mathbf{0})}(\xbf^*) + \widetilde G_{(\xbf^*, \mathbf{0})}(\xbf^*) - \widetilde G_{(\xbf^k, \mathbf{0})}(\bar\xbf)\\
   	= &11\varepsilon/12 -\varepsilon/6 = 3\varepsilon/4,
   	\end{aligned}
   	\]
   	contradicting with \eqref{g is small}. Therefore, $\xbf^*\in\arg\min_{\xbf}\widetilde G_{(\xbf^*,\mathbf{0})}(\xbf)$. 
   	By Lemma~\ref{lem.limit problem}, $\xbf^*$ satisfies the first-order optimality for \eqref{prob.J}.
   \end{proof}

   \subsection{Existence of Cluster Points}
   
   We will show that our proposed algorithm AIR is a descent method for the function $J(\xbf, \epsbf)$.  Consequently, both the existence of solutions to 
   \eqref{f.sparse} as well as the existence of the cluster point to AIR can be guaranteed by understanding conditions under which the iterates generated by AIR
   is bounded. For this purpose, we need to investigate the asymptotic geometry of $J$ and $X$. 
   In the following a series of results, we discuss the conditions guaranteeing  the  boundedness of $L(J(\xbf^0; \epsbf^0); J_0) $. 
   The concept of horizon cone is a useful tool to characterize the boundedness of a set, which is defined as follows.

   \begin{definition}~\cite[Definition 3.3]{Roc98} Given $Y\subset \mathbb{R}^n$, the horizon cone of $Y$ is 
   	\[ Y^\infty := \{ \zbf \mid \exists t^k\downarrow 0, \{\ybf^k\}\subset Y \text{  such that  }   t^k\ybf^k \to \zbf \}.\]
   \end{definition}
   
   We have the basic properties about horizon cones given in the following proposition, where the first case is trivial  to show and others are from~\cite{Roc98}.  
   
   \begin{proposition}\label{prop.cone}
   	The following hold:
   	\begin{enumerate}
   		\item[(i)] If $X\subset Y \subset \mathbb{R}^n$,  then $X^\infty\subset Y^\infty$. 
   		\item[(ii)]~\cite[Theorem 3.5]{Roc98} The set $Y\subset \mathbb{R}^n$ is bounded if and only if $Y^\infty = \{0\}$. 
   		\item[(iii)]~\cite[Exercise 3.11]{Roc98} Given $Y_i\subset \mathbb{R}^{n_i}$ for $i\in\Gcal$, we have $(Y_1\times \ldots \times Y_m)^\infty 
   		= Y_1^\infty \times \ldots \times Y_m^\infty$.
   		\item[(iv)]~\cite[Theorem 3.6]{Roc98} If $C\subset \mathbb{R}^n$ is non-empty, closed, and convex, then 
   		\[ C^\infty = \{\zbf \mid C+\zbf \subset C\}.\]
   	\end{enumerate}
   \end{proposition}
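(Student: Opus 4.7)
The plan is to prove only part (i), since parts (ii)--(iv) are cited verbatim from \cite{Roc98} and require no independent argument. What (i) asserts is that the horizon-cone operator is monotone with respect to set inclusion, and my approach would be to verify this directly from the definition, simply reusing the witnessing sequences that testify to membership in $X^\infty$.

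Concretely, I would fix an arbitrary $\zbf \in X^\infty$ and invoke the definition to extract a scalar sequence $t^k \downarrow 0$ together with vectors $\ybf^k \in X$ such that $t^k \ybf^k \to \zbf$. Because $X \subset Y$, the same vectors satisfy $\ybf^k \in Y$, while the scalars $t^k$ still tend to $0$ and the limit $t^k \ybf^k \to \zbf$ is unchanged. Thus $(t^k, \ybf^k)$ is a valid pair of witnesses for $\zbf \in Y^\infty$, and the inclusion $X^\infty \subset Y^\infty$ follows.

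There is essentially no obstacle here: the argument is pure definitional bookkeeping, requires no closedness, convexity, or boundedness of either $X$ or $Y$, and occupies one line once notation is unfolded. This is precisely the sense in which the paper calls the case trivial, so the only concern is presentation, namely to keep the proof short enough that it does not distract from the more substantive results of \cite{Roc98} listed in parts (ii)--(iv).
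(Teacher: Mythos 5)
Your proof of part (i) is correct and is exactly the definitional argument the paper has in mind when it calls this case trivial: the witnessing pair $(t^k,\ybf^k)$ for $\zbf\in X^\infty$ serves verbatim as a witness for $\zbf\in Y^\infty$ since $X\subset Y$. Parts (ii)--(iv) are indeed just citations, so nothing further is needed.
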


   Next we investigate the boundedness of $L(J(\xbf^0; \boldsymbol{\epsilon}^0), J_0)$, and provide upper and lower estimates 
   of $L(J(\xbf^0; \boldsymbol{\epsilon}^0), J_0)$.  For this purpose, define 
   \[\begin{aligned}
   H(\xbf^0, \epsbf^0) := &\  \{ \bar \xbf \mid   \bar \xbf \in X^\infty, \bar \xbf \in L(f(\xbf^0); f)^\infty, \\
   &\bar\xbf_i\in L(c_i(\xbf_i^0)+\eps_i^0; c_i)^\infty, i\in\Gcal  \}, 
   and 
   \end{aligned}
   \]
   \[\begin{aligned}
   \tilde H(\xbf^0, \epsbf^0)  := &\ X^\infty \cap L(J(\xbf^0; \epsbf^0); f)^\infty\\
   & \cap (\prod_{i\in\Gcal} L(J(\xbf^0; \epsbf^0)-\underline f; r_i\comp c_i)^\infty ).
   \end{aligned}
   \]
   We now prove the following result about the lower level sets of $L(J(\xbf^0; \boldsymbol{\epsilon}^0), J_0)$. 
   \begin{theorem}\label{thm.level.set}
   	Let $\xbf^0\in X$ and $\epsbf^0\in\mathbb{R}^m_{++}$. Then 
   	\[ L( r_i(c_i(\xbf_i^0)+\eps_i^0); r_i\comp c_i ) = L(c_i(\xbf_i^0)+\eps_i^0; c_i) \]
   	for $i\in\Gcal$. Moreover, it holds that
   	\begin{equation}\label{level.set1}
   	\hat H(\xbf^0, \epsbf^0)   \subset  L(J(\xbf^0; \epsbf^0); J_0)^\infty.
   	\end{equation}
   	Furthermore, suppose $\underline f := \inf_{\xbf\in X} f(\xbf) > -\infty$. Then 
   	\begin{equation}\label{level.set2}
   	L(J(\xbf^0; \epsbf^0); J_0)^\infty \subset \tilde H(\xbf^0, \epsbf^0).
   	\end{equation}
   \end{theorem}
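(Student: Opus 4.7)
The plan is to handle the three assertions in sequence, relying on strict monotonicity of $r_i$ for the equivalence of level sets and on the recession-cone characterization of Proposition~\ref{prop.cone}(iv) for the two inclusions. (I read $\hat H$ in the displayed inclusion \eqref{level.set1} as a typo for $H$, since $\hat H$ is not defined in the excerpt.)

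For the first assertion, I would simply observe that by Assumption~\ref{basic assumption}(iii)--(iv), $r_i$ is strictly increasing on $\mathbb{R}_+$ while $c_i(\xbf_i)\ge 0$ for every $\xbf\in X$. Hence $r_i(c_i(\xbf_i))\le r_i(c_i(\xbf_i^0)+\eps_i^0)$ if and only if $c_i(\xbf_i)\le c_i(\xbf_i^0)+\eps_i^0$, which is exactly the claimed equality of level sets.

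For \eqref{level.set1}, I would fix $\bar\xbf\in H(\xbf^0,\epsbf^0)$ and apply Proposition~\ref{prop.cone}(iv) \emph{separately} to the three closed convex sets $X$, $L(f(\xbf^0);f)$, and $L(c_i(\xbf_i^0)+\eps_i^0;c_i)$ (convex because $X$, $f$, and $c_i$ are convex). Iterating the implication ``$C+\bar\xbf\subset C$'' and interpolating with convexity yields, for every $t\ge 0$,
\[ \xbf^0+t\bar\xbf\in X,\quad f(\xbf^0+t\bar\xbf)\le f(\xbf^0),\quad c_i(\xbf_i^0+t\bar\xbf_i)\le c_i(\xbf_i^0)+\eps_i^0. \]
Applying the monotone $r_i$ to the last inequality, summing, and invoking the equivalence from the first part, I get $J_0(\xbf^0+t\bar\xbf)\le f(\xbf^0)+\sum_i r_i(c_i(\xbf_i^0)+\eps_i^0)=J(\xbf^0;\epsbf^0)$ for every $t\ge 0$. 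The sequence $\ybf^k:=\xbf^0+k\bar\xbf\in L(J(\xbf^0;\epsbf^0); J_0)$ together with $t^k=1/k$ then satisfies $t^k\ybf^k\to\bar\xbf$, placing $\bar\xbf$ in the desired horizon cone.

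For \eqref{level.set2}, I would start from $\bar\xbf\in L(J(\xbf^0;\epsbf^0); J_0)^\infty$ with witnesses $t^k\downarrow 0$ and $\ybf^k\in L(J(\xbf^0;\epsbf^0); J_0)$ satisfying $t^k\ybf^k\to\bar\xbf$. Each $\ybf^k$ lies in $X$ and satisfies $r_j(c_j(\ybf^k_j))\ge 0$ (since $c_j\ge 0$, $r_j(0)=0$, and $r_j$ is strictly increasing on $\mathbb{R}_+$). Splitting the bound $f(\ybf^k)+\sum_j r_j(c_j(\ybf^k_j))\le J(\xbf^0;\epsbf^0)$ against $\underline f\le f(\ybf^k)$ then gives
\[ f(\ybf^k)\le J(\xbf^0;\epsbf^0),\qquad r_i(c_i(\ybf^k_i))\le J(\xbf^0;\epsbf^0)-\underline f\quad\text{for every }i. \]
Consequently $\ybf^k\in X\cap L(J(\xbf^0;\epsbf^0);f)\cap\prod_i L(J(\xbf^0;\epsbf^0)-\underline f; r_i\comp c_i)$. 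Monotonicity of horizon cones under inclusion (Proposition~\ref{prop.cone}(i)) together with the product rule (Proposition~\ref{prop.cone}(iii)) then yields $\bar\xbf\in\tilde H(\xbf^0,\epsbf^0)$.

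The main obstacle is \eqref{level.set1}: the level set $L(J(\xbf^0;\epsbf^0); J_0)$ is generically nonconvex because $r_i$ is concave, so Proposition~\ref{prop.cone}(iv) cannot be invoked on it directly. The proof must instead transport an entire ray $\xbf^0+t\bar\xbf$ through the \emph{three} convex level sets $X$, $L(f(\xbf^0);f)$, $L(c_i(\xbf_i^0)+\eps_i^0;c_i)$ and then reassemble a $J_0$-bound via the monotone composition established in part one---this is where the equivalence of level sets pays off. The reverse inclusion is comparatively routine provided one remembers to use $\underline f$ to decouple the $f$-term from the separable sparsity terms before invoking product-space horizon-cone identities.
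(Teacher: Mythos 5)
Your proposal is correct and follows essentially the same route as the paper: strict monotonicity of $r_i$ for the level-set identity, ray arguments through the three convex sets $X$, $L(f(\xbf^0);f)$, $L(c_i(\xbf_i^0)+\eps_i^0;c_i)$ for \eqref{level.set1} (and yes, $\hat H$ is a typo for $H$), and the decoupling via $r_i\ge 0$ and $f\ge\underline f$ followed by Proposition~\ref{prop.cone}(i),(iii) for \eqref{level.set2}. In fact your version of \eqref{level.set1} is slightly more careful than the paper's: by anchoring the ray at $\xbf^0$ (which lies in all three convex level sets by construction) rather than at an arbitrary $\xbf\in L(J(\xbf^0;\epsbf^0);J_0)$, you legitimately obtain $f(\xbf^0+t\bar\xbf)\le f(\xbf^0)$, a bound the paper asserts for a general anchor point without justification.
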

   
   \begin{proof} 
   	The convexity of $L(\xbf_i^0; r_i(c_i(\dotv)+\eps_i^0))$ is by the fact that 
   	\[\begin{aligned}
   	& \xbf_i \in L( r_i(c_i(\xbf_i^0)+\eps_i^0); r_i\comp c_i ) \\
   	\iff  &r_i(c_i(\xbf_i) ) \le r_i(c_i(\xbf_i^0)+\eps_i^0)\\
   	\iff &    c_i(\xbf_i) \le  c_i(\xbf_i^0)+\eps_i^0\\
   	\iff &    \xbf_i \in L(c_i(\xbf_i^0)+\eps_i^0; c_i),
   	\end{aligned}
   	\]
   	where the second equivalence is from the monotonic increasing property of $r_i$. Notice that $L(c_i(\xbf_i^0)+\eps_i^0; c_i)$ is convex. 
   	
   	Now we prove~\eqref{level.set1}.  Let $\xbf\in L(J(\xbf^0; \epsbf^0); J_0)$ and $\bar \xbf$ be an element of  $\hat H(\xbf^0, \epsbf^0)$. 
   	$$ \xbf + \lambda\bar \xbf\in X,\   \xbf + \lambda\bar\xbf \in L(f(\xbf^0); f)^\infty, $$ 
   	and 
   	$$  \xbf_i + \lambda\bar\xbf_i \in L(c_i(\xbf_i^0)+\eps_i^0; c_i)^\infty. $$
   	Therefore, it holds that 
   	\[ \begin{aligned} 
   	J_0( \xbf + \lambda\bar\xbf) = & f(\xbf + \lambda\bar\xbf) + \sum_{i\in\Gcal} r_i(c_i(\xbf_i + \lambda\bar\xbf_i))\\
   	\le & f(\xbf^0) + \sum_{i\in\Gcal} r_i(c_i(\xbf^0_i)+\eps_i^0)\\
   	= & J(\xbf^0; \epsbf^0).
   	\end{aligned}
   	\]
   	Consequently, $\bar\xbf\in L(J(\xbf^0; \epsbf^0); J_0)$, proving \eqref{level.set1}.

   	For \eqref{level.set2}, let $\bar\xbf \in L (J(\xbf^0; \epsbf^0); J_0)^\infty$.   We need to show that   $\bar \xbf$ is an element of  $\tilde H(\xbf^0, \epsbf^0)$.  For this, we may as well assume that $\bar\xbf\neq\mathbf{0}$.  By the fact that $L(J(\xbf^0; \epsbf^0); J_0)^\infty$, there exists $t^k\downarrow 0$ and $\{\xbf^k\}\subset X$ such that $J_0(\xbf^k) \le J(\xbf^0; \epsbf^0)$ and $t^k\xbf^k \to \bar \xbf$. Consequently, $\bar\xbf\in X^\infty$.  Hence 
   	\beq\label{sub.level.set1}
   	L (J(\xbf^0; \epsbf^0); J_0)^\infty \subset X^\infty.
   	\eeq
   	On the other hand, let $\tilde\xbf\in L (J(\xbf^0; \epsbf^0); J_0)$. It then follows that 
   	\[ f(\tilde\xbf) = J_0(\tilde \xbf) - \sum_{i\in\Gcal} r_i(c_i(\tilde\xbf_i)) \le J_0(\tilde \xbf) \le  J(\xbf^0; \epsbf^0), \]
   	where the first inequality is by the fact that $r_i\ge 0$. Consequently, 
   	$\tilde\xbf\in L(J(\xbf^0; \epsbf^0); f)$, implying $L(J(\xbf^0; \epsbf^0); J_0) \subset L(J(\xbf^0; \epsbf^0); f)$. Hence 
   	\beq\label{sub.level.set2}
   	L(J(\xbf^0; \epsbf^0); J_0)^\infty \subset L(J(\xbf^0; \epsbf^0); f)^\infty.
   	\eeq
   	Now consider $c_i$.  We have for $i\in \Gcal$
   	\[ r_i\comp c_i(\tilde \xbf_i) = J_0(\tilde\xbf) - f(\tilde\xbf)-\sum_{j\in\Gcal,j\ne i} r_i(c_i(\tilde\xbf_i)) \le J(\xbf^0; \epsbf^0) - \underline f,\]
   	implying $\tilde\xbf_i \in L(J(\xbf^0; \epsbf^0); r_i\comp c_i)$, $i\in\Gcal$. Therefore, 
   	\[ L(J(\xbf^0; \epsbf^0); J_0) \subset \prod_{i\in\Gcal} L(J(\xbf^0; \epsbf^0) - \underline f; r_i\comp c_i), \]
   	This implies that 
   	$$
   	\begin{aligned}
   	L(J(\xbf^0; \epsbf^0); J_0)^\infty 
   	&\subset  (\prod_{i\in\Gcal} L(J(\xbf^0; \epsbf^0)-\underline f; r_i\comp c_i))^\infty  \\
   	&=   \prod_{i\in\Gcal} L(J(\xbf^0; \epsbf^0)-\underline f; r_i\comp c_i)^\infty,
   	\end{aligned}
   	$$
   	which, combined with \eqref{sub.level.set1} and \eqref{sub.level.set2}, yields \eqref{level.set2}.
   \end{proof}
   
   The following results follow directly from  Theorem~\ref{thm.level.set}. 
   \begin{corollary}\label{coro.bound}
   	If there exists $\bar\xbf\neq0$ such that 
   	\[ \bar \xbf \in X^\infty,\  \bar \xbf \in L(f(\xbf^0); f)^\infty,  \  \bar\xbf_i\in L(c_i(\xbf_i^0)+\eps_i^0; c_i)^\infty, i\in\Gcal, \]
   	then $L(J(\xbf^0;\epsbf^0); J_0)$ is unbounded.   Conversely, if one of the sets 
   	\[ X^\infty, \  L(J(\xbf^0; \epsbf^0); f)^\infty,  \text{ and  }\  (\prod_{i\in\Gcal} L(J(\xbf^0; \epsbf^0)-\underline f; r_i\comp c_i)^\infty )\]
   	is empty, then $L(J(\xbf^0;\epsbf^0); J_0)$ is bounded. 
   \end{corollary}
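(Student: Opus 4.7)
The plan is to deduce the corollary directly from the two inclusions \eqref{level.set1} and \eqref{level.set2} of Theorem~\ref{thm.level.set}, together with the characterization of boundedness via horizon cones given by Proposition~\ref{prop.cone}(ii). No new geometric arguments are needed; this is essentially a bookkeeping step.

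For the first (unboundedness) statement, I would start from the hypothesis that there exists a nonzero $\bar\xbf$ belonging simultaneously to $X^\infty$, $L(f(\xbf^0); f)^\infty$, and $L(c_i(\xbf^0_i)+\eps_i^0; c_i)^\infty$ for every $i\in\Gcal$. By the very definition of $H(\xbf^0,\epsbf^0)$ given just before Theorem~\ref{thm.level.set}, this says exactly $\bar\xbf\in H(\xbf^0,\epsbf^0)$. Invoking \eqref{level.set1} then places $\bar\xbf\in L(J(\xbf^0;\epsbf^0); J_0)^\infty$. Since $\bar\xbf\ne\bm{0}$, the horizon cone of the level set is strictly larger than $\{\bm{0}\}$, so Proposition~\ref{prop.cone}(ii) forces $L(J(\xbf^0;\epsbf^0); J_0)$ to be unbounded.

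For the converse (boundedness) statement, I would first apply Proposition~\ref{prop.cone}(iii) to rewrite the horizon cone of the product set as the product of horizon cones, so that $\tilde H(\xbf^0,\epsbf^0)$ is the intersection of the three horizon cones listed in the corollary. Because horizon cones of nonempty sets always contain $\bm{0}$, the sensible reading of the hypothesis ``one of the sets is empty'' is that one of these three horizon cones equals $\{\bm{0}\}$; equivalently, by Proposition~\ref{prop.cone}(ii), that one of $X$, $L(J(\xbf^0;\epsbf^0); f)$, or each factor $L(J(\xbf^0;\epsbf^0)-\underline f; r_i\comp c_i)$ is bounded. In that case the intersection $\tilde H(\xbf^0,\epsbf^0)$ collapses to $\{\bm{0}\}$, and \eqref{level.set2} gives $L(J(\xbf^0;\epsbf^0); J_0)^\infty\subset\{\bm{0}\}$. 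A second application of Proposition~\ref{prop.cone}(ii) yields boundedness of $L(J(\xbf^0;\epsbf^0); J_0)$.

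There is no genuine obstacle here; the only subtle point is linguistic, namely interpreting ``empty'' as ``trivial'' in the presence of the convention that horizon cones automatically contain the origin. Once that convention is fixed, the argument is a one-line application of each half of Theorem~\ref{thm.level.set}.
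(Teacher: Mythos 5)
Your proposal is correct and is exactly the intended argument: the paper offers no explicit proof, stating only that the corollary ``follows directly from Theorem~\ref{thm.level.set},'' and your two applications of the sandwich $\hat H(\xbf^0,\epsbf^0)\subset L(J(\xbf^0;\epsbf^0);J_0)^\infty\subset\tilde H(\xbf^0,\epsbf^0)$ combined with Proposition~\ref{prop.cone}(ii) are the natural filling-in. Your reading of ``empty'' as ``equal to $\{\bm{0}\}$'' (and the identification of $H$ with the $\hat H$ appearing in \eqref{level.set1}) resolves the paper's own loose wording in the only sensible way.
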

   
   Based on Corollary~\ref{coro.bound}, 
   we provide    specific  cases in the following proposition  that can guarantee the boundedness of $L(J(\xbf^0;\epsbf^0); J_0)$. 
   
   \begin{proposition}\label{lem.AIR2}  
   	Suppose  
   	$\xbf^0\in  X$ and relaxation vector $\epsbf^0 \in \Rb^m_{++}$. Then the level set  $L(J(\xbf^0, \epsbf^0), J_0)$ is bounded, if  one of the following conditions holds true
   	\begin{enumerate}
   		\item[(i)] $X$ is compact. 
   		\item[(ii)] $f$ is coercive. 
   		\item[(iii)] $r_i\comp c_i$, $i\in\Gcal$ are all coercive. 
   		\item[(iv)] Assume 
   		\[  \gamma_i := \sup_{\|\xbf_i\|\to \infty} r_i(c_i(\xbf_i)) < +\infty, \ i\in\Gcal.\]
   		Suppose $(\xbf^0, \epsbf^0)$ is selected to satisfy $\sum\limits_{i\in\Gcal} r_i(c_i(\xbf_i^0)+\eps_i^0) \le \underline f + \min\limits_i   \gamma_i$.
   	\end{enumerate}
   \end{proposition}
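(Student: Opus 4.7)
The plan is to derive boundedness of $L(J(\xbf^0;\epsbf^0);J_0)$ in each case from the inclusion \eqref{level.set2} of Theorem~\ref{thm.level.set},
\[
L(J(\xbf^0;\epsbf^0);J_0)^\infty \subset X^\infty \cap L(J(\xbf^0;\epsbf^0);f)^\infty \cap \Bigl(\prod_{i\in\Gcal} L(J(\xbf^0;\epsbf^0)-\underline f; r_i\comp c_i)^\infty\Bigr),
\]
combined with the equivalence ``$Y$ bounded $\iff Y^\infty=\{\bm{0}\}$'' from Proposition~\ref{prop.cone}(ii). It suffices in each case to exhibit one factor on the right that collapses to $\{\bm{0}\}$, since that makes the intersection trivial and forces $L(J(\xbf^0;\epsbf^0);J_0)^\infty=\{\bm{0}\}$.

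Cases (i), (ii), and (iii) are essentially bookkeeping. For (i), compactness of $X$ gives $X^\infty=\{\bm{0}\}$ directly. For (ii), coercivity of $f$ means every sublevel set of $f$ is bounded, so in particular $L(J(\xbf^0;\epsbf^0);f)^\infty=\{\bm{0}\}$. For (iii), coercivity of each composite $r_i\comp c_i$ makes every factor $L(J(\xbf^0;\epsbf^0)-\underline f; r_i\comp c_i)$ bounded; Proposition~\ref{prop.cone}(iii) then turns the horizon cone of the Cartesian product into $\prod_{i\in\Gcal}\{\bm{0}\}=\{\bm{0}\}$.

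Case (iv) is where the real work lies, since boundedness of $r_i$ destroys coercivity of $r_i\comp c_i$ and rules out a direct appeal to (iii). My approach is to fix $\xbf\in L(J(\xbf^0;\epsbf^0);J_0)$, and combine $f(\xbf)\ge \underline f$ with non-negativity of the $r_i$'s to get, for each $j\in\Gcal$,
\[
r_j(c_j(\xbf_j)) \le J(\xbf^0;\epsbf^0)-\underline f.
\]
The hypothesis $\sum_{i\in\Gcal} r_i(c_i(\xbf_i^0)+\eps_i^0)\le \underline f + \min_i\gamma_i$ will then be used to keep the right-hand side below $\gamma_j$. Since $r_j$ is continuous, strictly increasing on $\mathbb{R}_+$, and reaches $\gamma_j$ only in the limit $c_j\to+\infty$, the sublevel set $\{c\ge 0 : r_j(c)\le \alpha\}$ is bounded for every $\alpha<\gamma_j$, and coercivity of $c_j$ from Assumption~\ref{basic assumption}(iv) then bounds $\xbf_j$ itself. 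Running this over all $j\in\Gcal$ bounds each factor of the product level set and trivializes its horizon cone.

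The main obstacle is extracting a strict inequality in case (iv): because $\gamma_j$ is only the supremum of $r_j$ and is never attained, one must guarantee $r_j(c_j(\xbf_j))$ stays strictly below $\gamma_j$ so that the one-dimensional inverse image remains finite. This is exactly where the slackness built into the hypothesis on $(\xbf^0,\epsbf^0)$ is consumed; equivalently, a contradiction argument taking $\|\xbf_j^k\|\to\infty$ along a sequence in the level set yields $r_j(c_j(\xbf_j^k))\to\gamma_j$, violating the level-set inequality. Once this strictness is in place, the remaining three cases are routine applications of Proposition~\ref{prop.cone} and Theorem~\ref{thm.level.set}.
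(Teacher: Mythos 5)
Your proposal is correct and follows essentially the same route as the paper: parts (i)--(iii) are dispatched exactly as you do, via Corollary~\ref{coro.bound} (Theorem~\ref{thm.level.set} combined with Proposition~\ref{prop.cone}), and for part (iv) the paper runs precisely the contradiction you sketch at the end --- a sequence in the level set with $\|\xbf^k_{\bar i}\|\to\infty$ forces $r_{\bar i}\comp c_{\bar i}(\xbf^k_{\bar i})\to\gamma_{\bar i}$, hence $J(\xbf^0;\epsbf^0)\ge \underline f+\min_{i}\gamma_i$, against the hypothesis on $(\xbf^0,\epsbf^0)$. The delicacy you flag about needing a strict inequality below $\gamma_j$ is present in the paper's own argument as well (its hypothesis is also non-strict), so your treatment is, if anything, more explicit about where that slack is consumed.
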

   
   \begin{proof} 
   	Part $(i)$-$(iii)$ are  trivial true by Corollary~\ref{coro.bound}.  We only prove part $(iv)$. 
   	
   	Assume by contradiction that  $L(J(\xbf^0;\epsbf^0); J_0)$ is unbounded, then there exists   $\bar\xbf\in L(J(\xbf^0;\epsbf^0); J_0)^\infty$ with  $\bar\xbf \ne0$.  By the definition of horizon cone, there exists $\{t^k\}\subset \mathbb{R}$ and $\{\xbf^k\}\subset X$ such that 
   	\[ t^k\downarrow 0,  J_0(\xbf^k) \le J(\xbf^0; \epsbf^0),  \text{ and }  t^k\xbf^k\to \bar \xbf.\]
   	Therefore, there must be an $\bar i \in \Gcal$, such that $\|\xbf^k_{\bar i}\|_2 \to \infty$, implying $r_i\comp c_i(\xbf_{\bar i}^k) \to \gamma_{\bar i}$. This means,  
   	$$
   	\begin{aligned}
   	J(\xbf^0;\epsbf^0)   &\ge \lim_{k\to\infty} J_0(\xbf^k) \ge \underline f+ \lim_{k\to\infty} r_i\comp c_i(\xbf_{\bar i}^k) \\
   	&=\underline f+  \gamma_{\bar i} \ge \underline f+ \min\limits_{i\in\Gcal} \gamma_i,
   	\end{aligned}
   	$$
   	a contradiction.  Therefore, $L(J(\xbf^0, \epsbf^0), J_0)$ is bounded. 
   \end{proof}
   
   Proposition~\ref{lem.AIR2}(iv) indicates that the initial iterate $\xbf^0$ and $\epsbf^0$ may need to be chosen sufficiently close to 0 to  enforce convergence 
   if $\phi_i$ is not coercive such as  
   \eqref{f.inv}.  
   
\section{Numerical Experiments}
\label{sec.experiment}

In this section, we test  our proposed  AIR algorithm for nonconvex and nonsmooth sparse optimization problems in two numerical experiments and exhibit its performance. 
In both experiments, the test problems  have   $f(\xbf)\equiv  0$. 
The algorithm is implemented in Matlab with the subproblems solved by the CVX solver~\cite{cvx}.  
We consider  two ways of choosing $r_i$ and $c_i$, $c_i(\xbf_i)=\|\xbf_i\|_1$ and $c_i=\|\xbf_i\|_2^2$,  as described in Table~\ref{tab.1}, so that they can be referred as $\ell_1$-algorithm and $\ell_2$-algorithm, respectively.  In the subproblem, we use identical value for each component of the relaxation parameter $\epsbf^k$, i.e., $\epsbf^k = \epsilon^k \ebf$. In following two experiments, we define \emph{sparsity} as the nonzeros of the vectors.

\subsection{Sparse Signal Recovery}

{

In this subsection, we consider a sparse signal recovery problem~\cite{candes2008enhancing}, which aims to recover sparse vectors from linear measurements. This problem can be formulated as
\begin{equation}\label{f.cs}
\begin{aligned}
\min_{\xbf\in\mathbb{R}^n}\quad & \|\xbf\|_p^p\\
\st \quad & \Abf\xbf = \bbf,
\end{aligned}
\end{equation}
where $\Abf\in \mathbb{R}^{m\times n}$ is the measurement matrix, $\bbf\in \mathbb{R}^{m\times 1}$ is the measurement vector and $p\in(0,1)$.

In the numerical experiments, we fix $n=256$ and the measurement numbers $m=100$. Draw the measurement matrix $\mathbf{A}\in \mathbb{R}^{m\times n}$ with entries normally distributed. Denote $s$ as the number of nonzero entries of $\mathbf{x}_0$, and set $s=40$. We repeat the following procedure $100$ times:
\begin{enumerate}
	\item[(i)] Construct $\mathbf{x}_0\in \mathbb{R}^{n\times 1}$ with randomly zeroing $n-s$ components. Each nonzero entries is chosen randomly from a zero-mean unit-variance Gaussian distribution.
	\item[(ii)] Form $\mathbf{b}_0=\mathbf{A}\mathbf{x}_0$.
	\item[(iii)] Solve the problem for $\hat{\mathbf{x}}_0$.
\end{enumerate}

We compare our AIR algorithms with the iterative reweighted unconstrained $\ell_q$ for sparse vector recovery
(IRucLq-v) algorithm \cite{lai2013improved}. The IRucLq-v algorithm penalizes the linear constraint with a fixed parameter $\lambda$, yielding an unconstrained problem. Then, it uses a reweighted least square method to approximately solve the unconstrained problem, of which the subproblem can be addressed by solving a linear system. We set $\lambda=10^{-6}$, initialize $\epsilon^0=1$ and $\mathbf{x}^0=\mathbf{0}$. Update $\epsilon^{k+1} = \min \{\epsilon^k, \alpha \cdot r(\mathbf{x}^{k+1})_{s+1} \}$, where $r(\mathbf{x}^{k+1})_{s+1}$ denotes the $s+1$ largest (in absolute value) component of $\mathbf{x}^{k+1}$.  Set $s = \lfloor m/2 \rfloor$ and $\alpha=0.9$.

For our AIR algorithms, at each iteration, the subproblem of AIR $\ell_1$-algorithm can be equivalently formulated as a Linear Programming (LP) problem; the subproblem of AIR $\ell_2$-algorithm is a Quadratic Programming (QP) problem. We initialize $\epsilon^0=1$ and $\mathbf{x}^0=\mathbf{0}$. Update $\epsilon^{k+1} = 0.7\epsilon^k$ for AIR $\ell_1$-algorithm, and $\epsilon^{k+1} = 0.9\epsilon^k$ for AIR $\ell_2$-algorithm. We terminate our AIR $\ell_1$-algorithm whenever $ \frac{\|\mathbf{x}^{k+1} - \mathbf{x}^k\|_2}{\|\mathbf{x}^k\|_2} \le 10^{-8}$ and $\epsilon^{k+1}\le 10^{-3}$ are satisfied and record the final objective value as $f(\mathbf{x}_{\ell_1})$. The AIR $\ell_2$-algorithm and the IRucLq-v algorithm are terminated when $\frac{|f(\mathbf{x}^k)-f(\mathbf{x}_{\ell_1})|}{f(\mathbf{x}_{\ell_1})}\le 10^{-3}$ or $k\ge 500$.

We first use one typical realization of the simulation to examine the convergence of all algorithms. We solve the linear system $\mathbf{A}\mathbf{x}= \mathbf{b}$ to get the same feasible initial point for all algorithms. The experimental result is shown in Figure \ref{fig.SSR_obj_iter}. From the result, we observe that the AIR $\ell_1$-algorithm converges faster than the other algorithms, and AIR $\ell_2$-algorithm and IRucLq-v algorithm own similar convergence rates.

\begin{figure}[htb]
	\center
	\includegraphics[scale = 0.28]{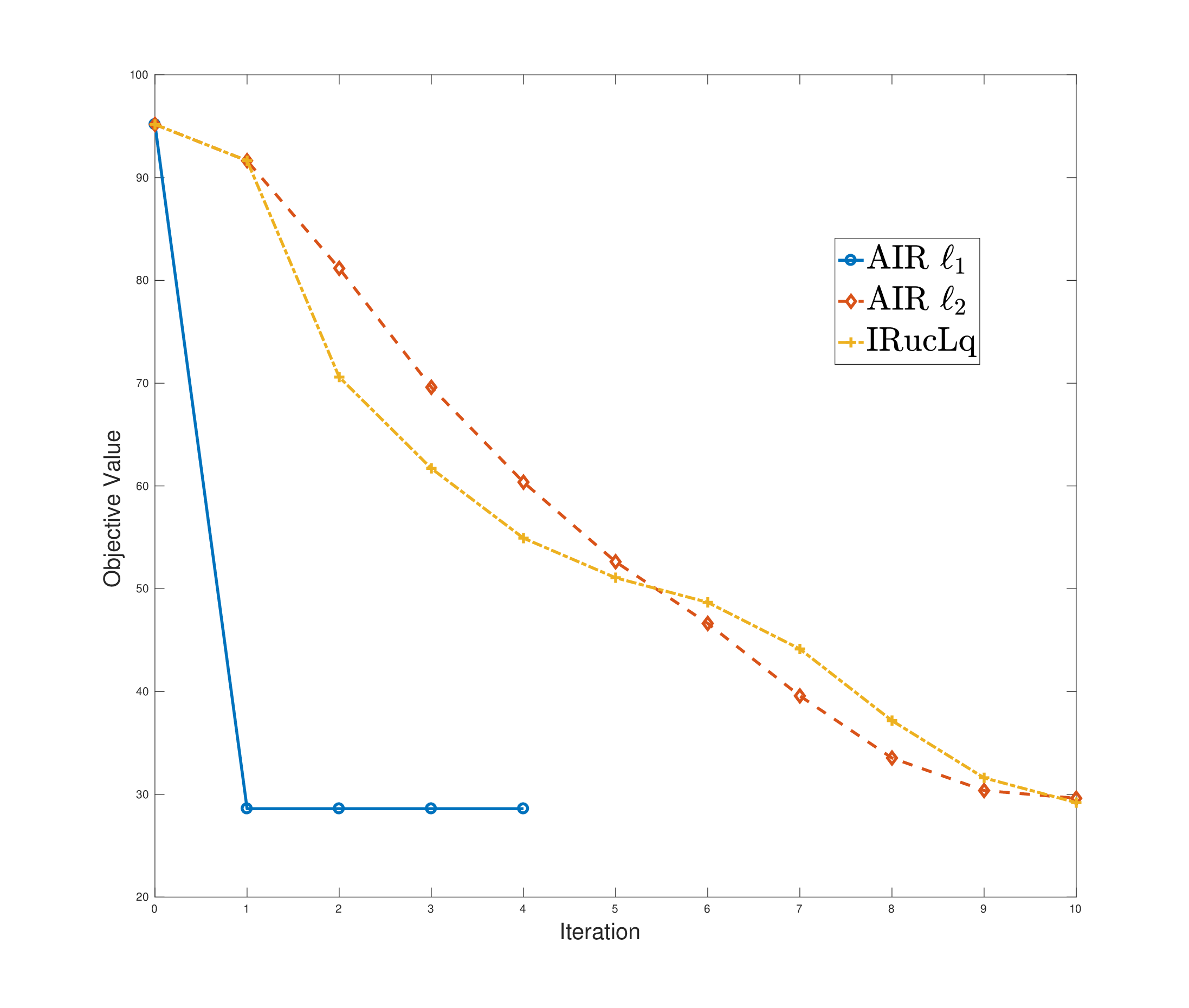}
	\caption{Objective value versus iteration.}
	\label{fig.SSR_obj_iter}
\end{figure}

Then, we further investigate the properties of the solutions generated by all these algorithms. We select the threshold in different levels as
$$\nu \in \{10^{-3},10^{-6},10^{-9},10^{-12}\},$$ and set $x_i=0$ if $|x_i|<=\nu,\ i=1,\cdots,n$ for each $\nu$. The box plots in Figure \ref{fig.SSR_box} demonstrate the statistical properties of the number of nonzeros of all algorithms versus different thresholds.
\begin{figure}[htb]
	\center
	\includegraphics[scale = 0.75]{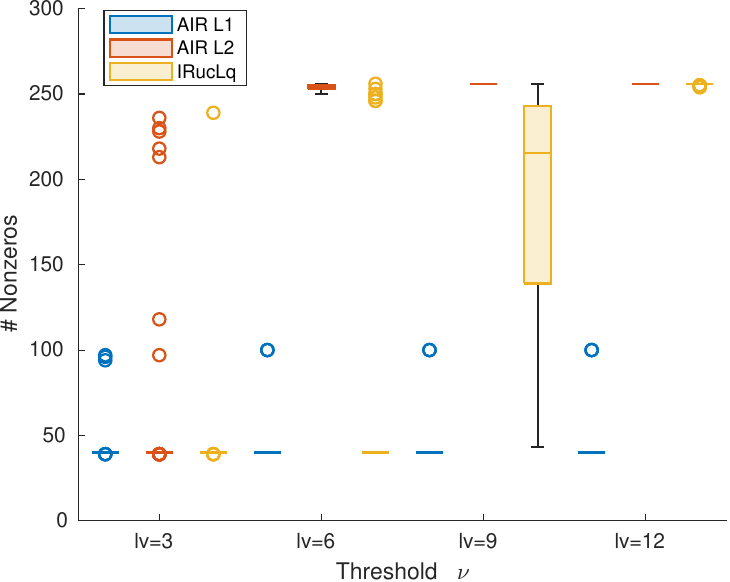}
	\caption{Empirical success probability  versus sparsity.}
	\label{fig.SSR_box}
\end{figure}
The corresponding computation time and constraint violation results are shown in Table \ref{t.SSR_time}. 
\begin{table}[h]
	\caption{
		Average runtime and constraint violation results with associated standard deviation.}
	\label{t.SSR_time}
	\centering
	\begin{tabular}{c|c|c|c}
		\hline
		\multicolumn{1}{l|}{} & \multicolumn{3}{c}{Method}                                                \\ \hline
		                   & AIR $\ell_1$           & AIR $\ell_2$            & IRucLq-v                   \\ \hline
		Time (s)                      & 0.289 $\pm$ 0.035  & 1.672 $\pm$ 0.187  & 0.013 $\pm$ 0.022  \\ \hline
		$\|\mathbf{A}\hat{\mathbf{x}}-\mathbf{b}\|_2$                      & (9.18 $\pm$ 7.31) $\times$ 1e-13  & (5.09 $\pm$ 0.79) $\times$ 1e-14  & (1.76 $\pm$ 0.67) $\times$ 1e-5  \\ \hline
	\end{tabular}
\end{table}

We have the following observations from Fig. \ref{fig.SSR_box} and Table \ref{t.SSR_time}.
\begin{enumerate}
	\item[(i)] From Fig. \ref{fig.SSR_box}, we see that the AIR $\ell_1$-algorithm outputs the most sparse solution. Furthermore, it has the most robust solution with respect to different thresholds. The IRucLq-v algorithm outperforms the AIR $\ell_2$-algorithm both in sparsity and robustness.
	\item[(ii)] From Table \ref{t.SSR_time}, it shows that the IRucLq-v algorithm is more efficient than the AIR algorithms, since it only needs to solve a linear system for each subproblem rather than a LP or QP problem. However, the cost of efficiency is sacrificing the feasibility, which is observed by the constraint violation results.   
	
\end{enumerate}

}

\subsection{Group Sparse Optimization}

In the second experiment, we consider the cloud radio access network power consumption problem~\cite{gsbf}.
In order to solve this problem, a three-stage group sparse beamforming method (GSBF) is proposed in~\cite{gsbf}. The GSBF method solves a group sparse problem in the first stage to induce the group sparsity for the beamformers to guide the remote radio head (RRH) selection. This group sparse problem is addressed by minimizing the mixed $\ell_1/\ell_2$-norm. For further promoting the group sparsity, we replace the $\ell_1/\ell_2$-norm by the $\ell_p/\ell_2$ quasi-norm~\eqref{f.lp} with $p\in(0,1)$~\cite{Shi2016Smoothed}, yielding the following problem
\begin{equation}
\begin{aligned}
\min_{\vbf}\quad & \sum_{l=1}^{L}\sqrt{\frac{\rho_l}{\eta_l}}\|\tilde{\vbf}_l\|_2^p\\
\st \quad & \sqrt{\sum_{i\neq k} \| h_k^{\sf{H}} \vbf_i \|_2^2 + \sigma _k^2}\leq \frac{1}{\gamma _k}\Re (h_k^{\sf{H}} \vbf_k)\\
& \| {\tilde{\vbf}}_{l}  \| _2\leq \sqrt{P_l},\ l =1,\cdots,L, k = 1,\ldots, K.
\end{aligned}\label{f.GS}
\end{equation}
We consider the Cloud-RAN architecture with $L$ remote radio heads (RRHs) and $K$ single-antenna Mobile Users (MUs), where the $l$-th RRH is equipped with $N_l$ antennas. 
$\vbf_{lk} \in \mathbb{C}^{N_l}$ is the transmit beamforming vector from the $l$-th RRH to the $k$-th user with the group structure of transmit vectors ${\tilde{\vbf}}_{l}=[\vbf_{l1}^T,\cdots, \vbf_{lK}^T]^T\in \mathbb{C}^{KN_l\times 1}$. Denote the relative fronthaul link power consumption by $\rho_l$, and the inefficient of drain efficiency of the radio frequency power amplifier by $\eta_l$.
The channel propagation between user $k$ and RRH $l$ is denoted as $\hbf_{lk}\in \mathbb{C}^{N_l}$. 
$P_l$ is the maximum transmit power of the $l$-th RRH.
$\sigma _k$ is the noise at MU $k$.
$\bm{\gamma} = (\gamma_1,..., \gamma_K)^T$ is the target signal-to-interference-plus-noise ratio (SINR).


\subsubsection{Comparison with the mixed $\ell_1/\ell_2$ algorithm}

In this experiment, we compare our AIR $\ell_1$- and $\ell_2$-algorithms with the mixed $\ell_1/\ell_2$ algorithm~\cite{gsbf}.
We consider a network with $L=10$ $2$-antenna RRHs (i.e., $N_l=2$), and $K=6$ single-antenna MUs uniformly and independently distributed in the square region $[-1000,\ 1000]\times[-1000,\ 1000]$ meters. We set $P_l =1$, $\rho_l = 13$, $\eta_l=\frac{1}{4}$ for $l \in\{1,\cdots,L\}$, $\sigma_k = 1$ for $k \in\{1,\ldots, K\}$.
For each quality of service (QoS) $q$ in $\{0,2,4,6\}$, we set the target SINR $\gamma_k = 10^{q/10}$ for $k= 1,\ldots, K$. 
Repeat the following procedure $50$ times:
\begin{enumerate}
	\item[(i)]  Randomly generated network realizations for the channel propagation $\hbf_{lk}$, $l\in\{1,\cdots,L\}$, $k\in\{1,\cdots,K\}$.
	\item[(ii)] Adopt AIR $\ell_1$- and $\ell_2$-algorithm to solve~\eqref{f.GS} for $\tilde{\vbf}^*$.
	\item[(iii)] Regard the $l$-th component of $\tilde{\vbf}^*$ as zero, if $\|\tilde{\vbf}_l^*\|\le 10^{-3}$ for $l\in\{1,\cdots,L\}$.
\end{enumerate}


We set the maximum number of iterations as $T = 500$, $\epsilon_i^0=1$ for AIR and update by $\epsilon^{k+1} = 0.7\epsilon^k$  at each iteration with minimum threshold $10^{-6}$. Set $p=0.1$. The algorithm is terminated whenever
$ |f(\vbf^{k+1}) - f(\vbf^k)| \le 10^{-4}$ is satisfied or $k\ge T$. 

\begin{figure}[hbt]
	\center
	\includegraphics[scale = 0.37]{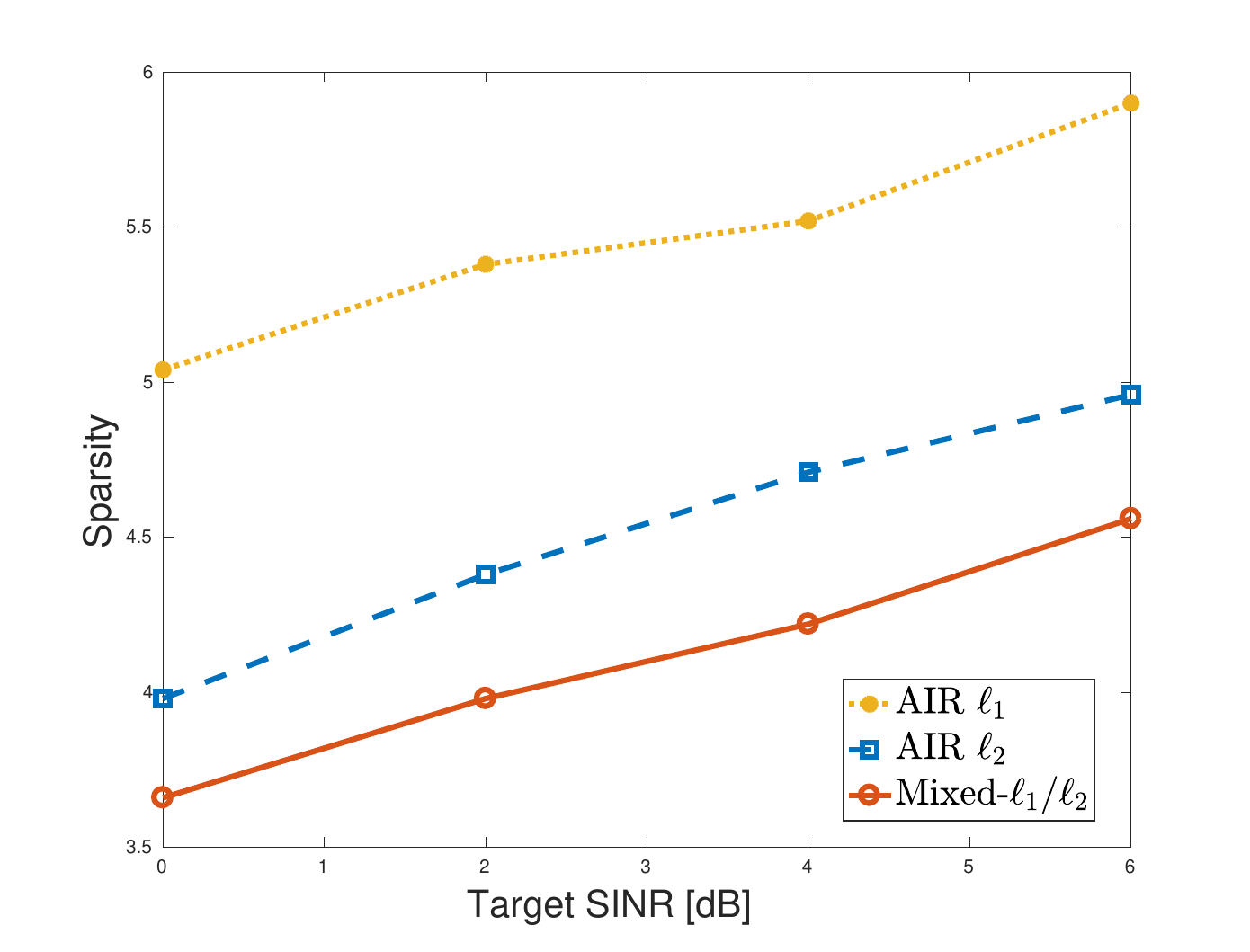}
	\caption{Average sparsity versus target SINR.}
	\label{fig.sparsity}
\end{figure}

In Fig.~\ref{fig.sparsity}, we depict the sparsity of the final solution returned by mixed $\ell_1/\ell_2$ algorithm, AIR $\ell_1$- and $\ell_2$-algorithms for problems with different SINR. The proposed AIR $\ell_1$- and $\ell_2$-algorithms outperform the mixed $\ell_1/\ell_2$ algorithm in promoting the group sparsity. And it is witnessed again that the AIR  $\ell_1$-algorithm outperforms AIR $\ell_2$-algorithm in the ability of  accurately  recovering sparse solutions.

{
\subsubsection{Comparison with the difference of convex algorithm}

We consider the same group sparse optimization problem \eqref{f.GS} in this section, and compare our AIR $\ell_1$- and $\ell_2$-algorithms with the SDCAM~\cite{liu2019successive}. We defer the details of the SDCAM for solving problem \eqref{f.GS} to the appendix. 

In this experiment, we consider a larger-sized network with $L=20$ $2$-antenna RRHs (i.e., $N_l=2$), and $K=15$ single-antenna MUs uniformly and independently distributed in the square region $[-2000,\ 2000]\times[-2000,\ 2000]$ meters. We set $P_l =1$, $\rho_l = 20$, $\eta_l=\frac{1}{4}$ for $l \in\{1,\cdots,L\}$, $\sigma_k = 1$ for $k \in\{1,\ldots, K\}$.
For each quality of service (QoS) $q$ in $\{0,2,4,6\}$, we set the target SINR $\gamma_k = 10^{q/10}$ for $k= 1,\ldots, K$. 
Randomly generated network realizations for the channel propagation $\hbf_{lk}$, $l\in\{1,\cdots,L\}$, $k\in\{1,\cdots,K\}$.

For the AIR algorithms, we set the maximum number of iterations as $T = 500$, $\epsilon_i^0=100$ for AIR and update by $\epsilon^{k+1} = 0.5\epsilon^k$  at each iteration with minimum threshold $10^{-6}$. Set $p=0.1$ (and $p=0.5$). For the case $p=0.1$, the AIR algorithms are terminated whenever both $ \frac{\|\vbf^{k+1} - \vbf^k\|_2}{\|\vbf^k\|_2} \le 10^{-5}$ and $\epsilon^{k+1}\le 10^{-3}$ are satisfied or $k\ge T$. For the case $p=0.5$, the AIR $\ell_1$-algorithm is terminated whenever $ \frac{\|\vbf^{k+1} - \vbf^k\|_2}{\|\vbf^k\|_2} \le 10^{-5}$ and $\epsilon^{k+1}\le 10^{-3}$ are satisfied or $k\ge T$. We denote the solution as $ \vbf_{\ell_1}$ and record the final objective value as $f(\vbf_{\ell_1})$. We terminate the AIR $\ell_2$-algorithm whenever $f(\vbf^{k+1})\le f(\vbf_{\ell_1})$ or $k\ge T$.

The SDCAM applies the Moreau envelope to approximate problem \eqref{f.GS}, which yields a sequence of DC subproblems. They solves the DC subproblems by the Nonmonotone Proximal Gradient method with majorization (NPG$_\text{major}$). 
In SDCAM, we set $\lambda_k = \max\{1/ 10^{k+1}, 10^{-10}\}$ and $\mathbf{v}^\text{feas}$ to be the vector of all ones. In $\text{NPG}_\text{major}$, we set $M=4$, $L_{\max} = 10^8,\ L_{\min} = 10^{-8}$, $\tau = 2$, $c = 10^{-5}$, $L_{k,0}=1$ and for $t\ge 1$,
$$
L_{k,t} = \max\big\{\min\{\frac{{\mathbf{s}^t}^T\mathbf{y}^t}{\|\mathbf{s}^t\|_2^2}, L_{\max}\}, L_{\min} \big\},
$$l
where $\mathbf{s}^t = \mathbf{v}^t - \mathbf{v}^{t-1}$, $\mathbf{y}^t = \nabla h(\mathbf{v}^t) - \nabla h(\mathbf{v}^{t-1})$.
We terminate $\text{NPG}_\text{major}$ when 
$$
\frac{\|\mathbf{s}^t\|_2}{\max(\|\mathbf{v}^t\|_2, 1)}<\epsilon_k\ \text{or}\ \frac{\|F_{\lambda_k}(\mathbf{v}^t) - F_{\lambda_k}(\mathbf{v}^{t-1})\|_2}{\max(|F_{\lambda_k}(\mathbf{v}^t)|, 1)}<10^{-6},
$$
where $\epsilon_0 = 10^{-3}$ and $\epsilon_k = \max\{\epsilon_{k-1}/1.5, 10^{-5}\}$. We terminate the SDCAM whenever $f(\vbf^{k+1})\le f(\vbf_{\ell_1})$ or $k\ge 1000$.

First, we explore the convergence rates of SDCAM and our AIR algorithms with $p=0.5$. We demonstrate the results of objective values versus CPU time for all algorithms in Fig. \ref{fig.obj_time}, where we only report one typical channel realization. The results show that AIR algorithms converge faster than SDCAM, especially the AIR $\ell_1$-algorithm.
\begin{figure}[hbt]
	\center
	\includegraphics[scale = 0.25]{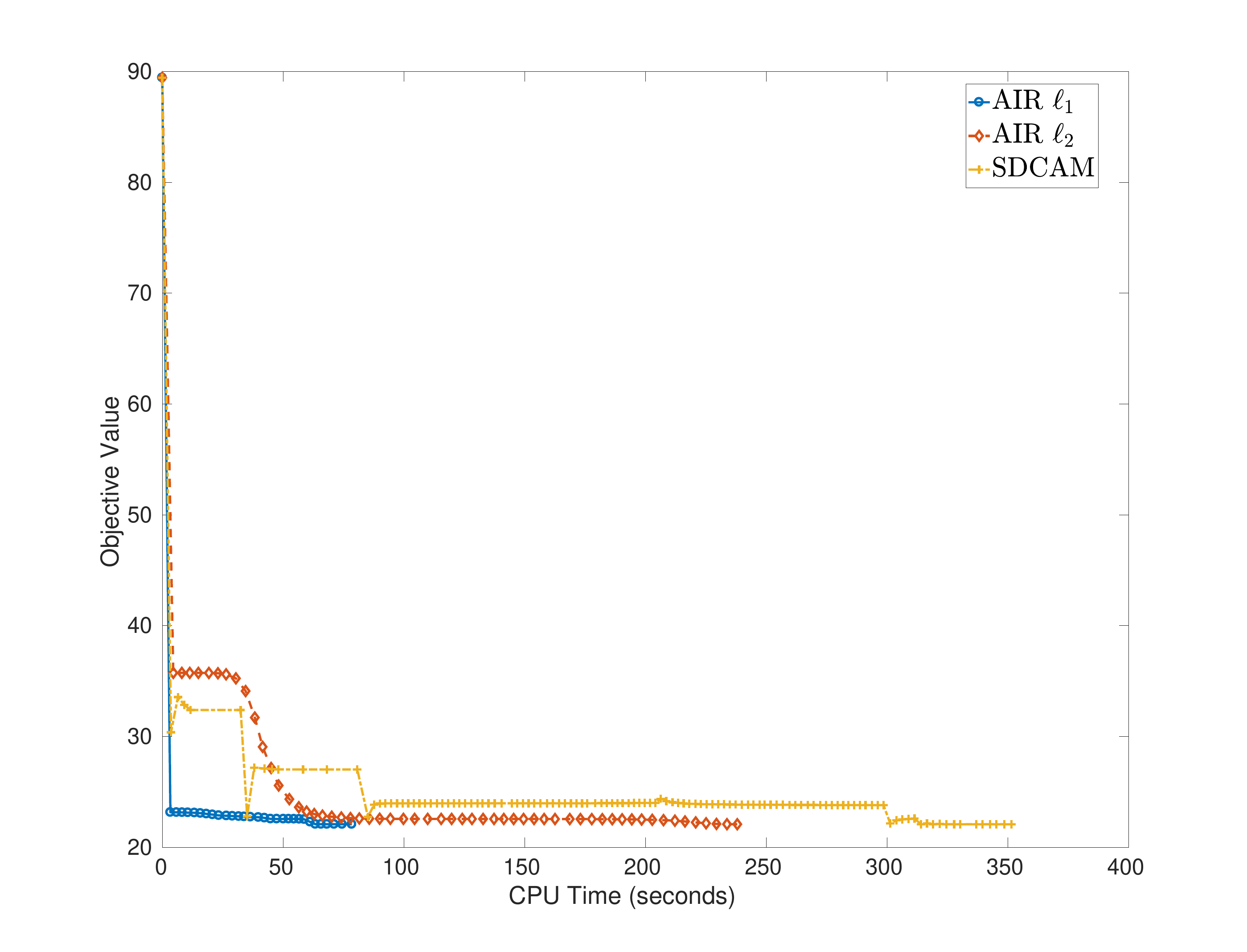}
	\caption{Objective value versus CPU time.}
	\label{fig.obj_time}
\end{figure}

Then, we further investigate the properties of the solutions generated by all these algorithms.  
For each quality of service (QoS) $q$ in $\{0,2,4,6\}$, we randomly generate the network realizations $10$ times. We select the threshold in different levels as
$$\nu \in \{10^{-3}, 10^{-4},10^{-5},10^{-6},10^{-7},10^{-8}\},$$
and set $\|\tilde{\vbf}_l\|_2=0$ if $\|\tilde{\vbf}_l\|_2<=\nu,\ l=1,\cdots,L$. The box plots in Fig. \ref{fig:SDCAM_AIR} demonstrate the distributions of the number of nonzeros of all algorithms versus different Target SINR. The corresponding computation time results are shown in Table \ref{t.time}.

We have the following observations from Fig. \ref{fig:SDCAM_AIR} and Table \ref{t.time}.
\begin{enumerate}
	\item[(i)] From Fig. \ref{fig:p01}, we see that the AIR algorithms with $p=0.1$, especially the AIR $\ell_1$-algorithm, has more sparse solution than the SDCAM. Furthermore, it shows that the solutions generated by our AIR algorithms are more robust with respect to different sparse levels.
	\item[(ii)] From Fig. \ref{fig:p05}, we see that the AIR $\ell_1$-algorithm with $p=0.5$ and the SDCAM perform similarly in this case. The AIR $\ell_1$-algorithm with $p=0.5$ has the best performance both in terms of the sparsity of the solution as well as the robustness of the solution in different levels.
	\item[(iii)] From Table \ref{t.time}, it shows that AIR $\ell_1$-algorithm converges faster than the AIR $\ell_2$-algorithm and the SDCAM. Moreover, the SDCAM fails to achieve the target objective twice for the cases $q=0$, $q=2$ and $q=4$.
\end{enumerate}

\begin{figure}
	\centering
	\begin{subfigure}[b]{0.45\textwidth}
		\centering
		\includegraphics[width=\textwidth]{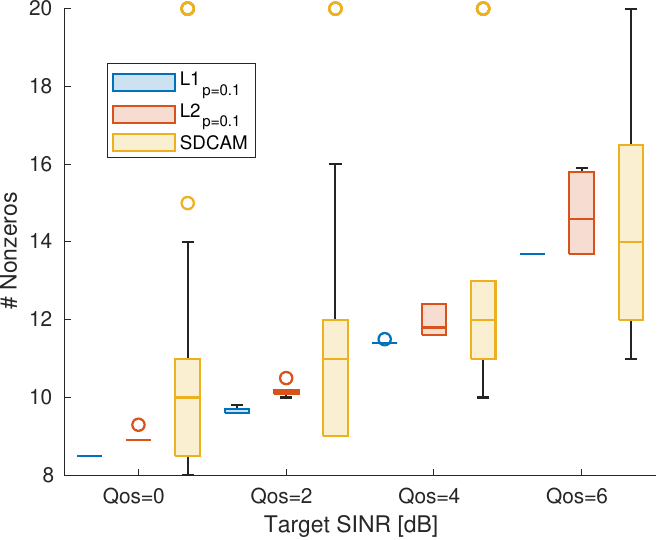}
		\caption{}
		\label{fig:p01}
	\end{subfigure}
	\hfill
	\begin{subfigure}[b]{0.45\textwidth}
		\centering
		\includegraphics[width=\textwidth]{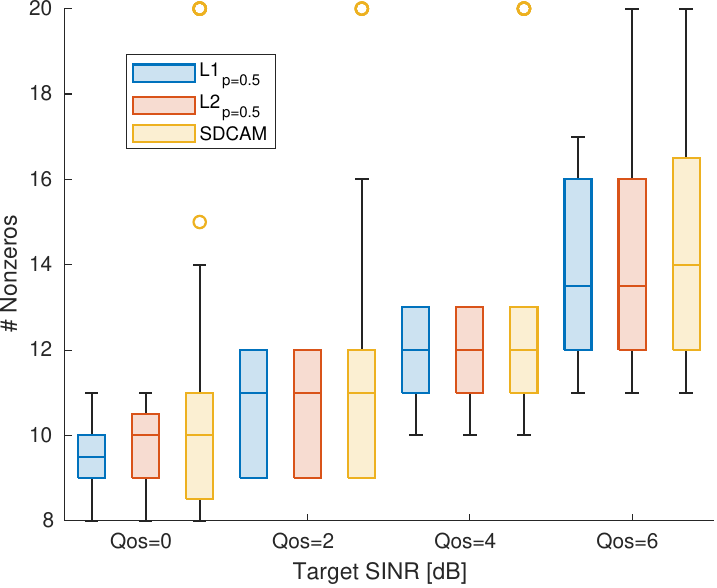}
		\caption{}
		\label{fig:p05}
	\end{subfigure}
	\caption{Comparing SDCAM with AIR algorithms by choosing \eqref{fig:p01} $p=0.1$ and \eqref{fig:p05} $p=0.5$.}
	\label{fig:SDCAM_AIR}
\end{figure}

\begin{table}[ht]
		\caption{Average runtime with associated standard deviation for SDCAM and AIR algorithms (with $p=0.5$). The number in parentheses is the number of the cases that the corresponding algorithms fail to achieve the target objective. Note that for the case SDCAM fails to achieve the target objective, we do not take the corresponding time into the statistical results.}\label{t.time}
		\centering
	\begin{tabular}{c|c|c|c}
		\hline
		\multicolumn{1}{l|}{} & \multicolumn{3}{c}{Method}                                                \\ \hline
		Qos                    & AIR $\ell_1$           & AIR $\ell_2$            & SDCAM                   \\ \hline
		0                      & 106.651$\pm$64.525 (0) & 227.949$\pm$82.590 (0)  & 219.306$\pm$78.987 (2)  \\ \hline
		2                      & 73.070$\pm$51.722 (0)  & 124.946$\pm$41.412 (0)  & 168.794$\pm$70.370 (2)  \\ \hline
		4                      & 74.113$\pm$40.362 (0)  & 206.876$\pm$116.152 (0) & 214.730$\pm$81.465 (2)  \\ \hline
		6                      & 52.764$\pm$11.141 (0)  & 149.554$\pm$70.900 (0)  & 150.455$\pm$109.333 (0) \\ \hline
	\end{tabular}
\end{table}

}

\section{Conclusions}
\label{sec.conclusion}
In this paper, we consider solving  a general formulation for nonconvex and nonsmooth sparse optimization problem, which can take into account  different sparsity-inducing terms.  An iteratively reweighted algorithmic framework  is proposed by solving  a sequence of weighted convex { regularization} subproblems.  We have also derived the optimality condition for the nonconvex and nonsmooth sparse optimization problem  and provided the global convergence analysis for the proposed iteratively reweighted methods.  

Two variants of our proposed algorithm, the $\ell_1$-algorithm and the $\ell_2$-algorithm,  are  implemented and tested. Numerical results exhibits their ability of recovering sparse signals.  It is also witnessed that  the iteratively $\ell_1$-algorithm is generally faster than the $\ell_2$-algorithm because much fewer iterations are needed for $\ell_1$-algorithm. 
Overall, our investigation leads to a variety of interesting research directions:
\begin{itemize}
	\item  A thorough comparison, through either theoretical analysis or numerical experiments, of the existing nonconvex and nonsmooth sparse optimization problems using AIR would be interesting to see.  This should be helpful in providing the guidance for the users to select sparsity-inducing functions.  
	\item Our implementation reduces the relaxation parameter $\epsilon$ by a fraction each time.  It would be useful if a dynamic updating strategy can be derived to reduce the efforts of parameter tuning as well as the sensitivity of the algorithm to $\epsilon$. 
	\item It would be meaningful to have a (local) complexity analysis for the reweigh-ted algorithms.
	
\end{itemize}

{
\begin{acknowledgements}
	Hao Wang's work was supported by National Natural Science Foundation of China [12001367].
	Yaohua Hu's work was supported in part by the National Natural Science Foundation of China [12071306, 11871347], Natural Science Foundation of Guangdong Province of China [2019A1515011917], and Natural Science Foundation of Shenzhen
	[(JCYJ20190808173603590].
\end{acknowledgements}

}

{
\appendix
\section{Implementation Details of SDCAM}

In this section, we provide the details of solving problem \eqref{f.GS} by using SDCAM.
By denoting
$$
\Omega := \{\mathbf{v}: \sqrt{\sum\limits_{i\neq k}\| \mathbf{h}_k^{\sf{H}} \mathbf{v}_i \|_2^2 + \sigma _k^2}\leq \frac{1}{\gamma _k}\Re (\mathbf{h}_k^{\sf{H}} \mathbf{v}_k),\ \| {\tilde{\mathbf{v}}}_{l}  \| _2\leq \sqrt{P_l},\ l =1,\cdots,L,\ k = 1,\cdots, K \},
$$
and
$$
P_1(\mathbf{v}) := \sum_{l=1}^{L}\sqrt{\frac{\rho_l}{\eta_l}}\|\tilde{\mathbf{v}}_l\|_2^p\quad\text{and}\quad P_0(\mathbf{v}) := \delta(\mathbf{v}|\Omega),
$$
we can reformulate problem \eqref{f.GS} as
$$
\min\limits_{\mathbf{v}}\quad F(\mathbf{v}):= P_0(\mathbf{v}) + P_1(\mathbf{v}). 
$$
Then this problem can be solved by the SDCAM \cite{liu2019successive}.
They approximate $F$ by its Moreau envelope at each iteration. More specifically, at iteration $k$, they solve the following approximate problem
$$
\min\limits_{\mathbf{v}}\quad F_{\lambda_k}(\mathbf{v}):=  P_0(\mathbf{v}) + e_{\lambda_k} P_1(\mathbf{v}),
$$
where $e_{\lambda_k} P_1(\mathbf{v})$ is the Moreau envelope of $ P_1(\mathbf{v})$ with parameter $\lambda_k$, which takes the form
$$ e_{\lambda_k} P_1(\mathbf{v}) = := \inf\limits_{\mathbf{x}}\{\frac{1}{2\lambda_k}\|\mathbf{v}-\mathbf{y}\|_2^2 + P_1(\mathbf{y})\}.$$
The SDCAM drives the parameter $\lambda_k$ to $0$ and solves each corresponding subproblem $F_{\lambda_k}$ iteratively. By taking advantage of the equivalently formulation of $e_{\lambda_k} P_1(\mathbf{v})$, i.e.,
$$ e_{\lambda_k} P_1(\mathbf{v}) = \frac{1}{2\lambda}\|\mathbf{v}\|_2^2 - \underbrace{\sup\limits_{\mathbf{y}\in \text{dom}\ P_1}\{\frac{1}{\lambda}\mathbf{v}^T\mathbf{y} - \frac{1}{2\lambda}\|\mathbf{y}\|_2^2-P_1(\mathbf{y})\}}_{D_{\lambda, P_1}(\mathbf{v})},$$
we can reformulate $F_{\lambda_k}$ as a DC problem, which can be solved by the $\text{NPG}_\text{major}$ method. The $\text{NPG}_\text{major}$ method solves the DC subproblem by combing the proximal gradient method with the nonmonotone line search technique, and terminates when the first-order optimality condition is satisfied.
Note that, for the $\text{NPG}_\text{major}$ method, we need to calculate the subgradient of $D_{\lambda, P_1}(\mathbf{v})$, which is proved equal to $\frac{1}{\mu} \text{prox}_{\lambda_k,P_1}(\mathbf{v})$. The proximity operator of $\ell_p$ norm with $p= 1/2$ (or $p=2/3$) has an analytic solution, which is provided in \cite{cao2013fast}.

}

\bibliographystyle{spmpsci}      
\bibliography{ref.bib}   

\end{document}